\newcommand{\RN}[1]{  \textup{\uppercase\expandafter{\romannumeral\#1}}}
\newtheorem{proposition}{Proposition}
\newtheorem{lemma}{Lemma}
\newtheorem{assumption}{Assumption}
\theoremstyle{definition}
\newtheorem{example}{Example}
\theoremstyle{remark}
\newtheorem{axiom}{Axiom}
\newtheorem{definition}{Definition}
\numberwithin{equation}{section}
\numberwithin{definition}{section}
\numberwithin{theorem}{section}
\numberwithin{proposition}{section}
\numberwithin{lemma}{section}}
\DeclareMathOperator*{\argmin}{arg\,min}
\newcommand*{\addFileDependency}[1]{
\typeout{(#1)}
%
%
\@addtofilelist{#1}
%
\IfFileExists{#1}{}{\typeout{No file #1.}}
}\makeatother
\newcommand*{\myexternaldocument}[1]{%
\externaldocument{#1}%
\addFileDependency{#1.tex}%
\addFileDependency{#1.aux}%
}
\title{How Flexible is that Functional Form? \\ Quantifying the Restrictiveness of Theories\thanks{We thank Nikhil Agarwal, Victor Aguiar, Abhijit Banerjee, Tilman B\"{o}rgers,  Vincent Crawford, Glenn Ellison, Benjamin Enke, Ben Golub, Taisuke Imai, Shaowei Ke, David Laibson, Rosa Matzkin, John Quah, Kareen Rozen, Jesse Shapiro, Ludvig Sinander, Charles Sprenger, Dmitry Taubinsky, and Emanuel Vespa for helpful comments, and NSF grants SES 185162 and 1951056 for financial support. We thank Kyohei Okumura for excellent research assistance.}}
\author{Drew Fudenberg\thanks{Department of Economics, MIT; drewf@mit.edu}  \quad \quad Wayne Gao \thanks{Department of Economics, University of Pennsylvania; waynegao@upenn.edu}\quad \quad Annie Liang\thanks{Department of Economics, Northwestern University; annie.liang@northwestern.edu. }}
\begin{document}

\global\long\def\a{\alpha}%
\global\long\def\b{\beta}%
\global\long\def\g{\gamma}%
\global\long\def\d{\delta}%
\global\long\def\e{\varepsilon}%
\global\long\def\l{\lambda}%
\global\long\def\t{\theta}%
\global\long\def\o{\omega}%
\global\long\def\s{\sigma}
\global\long\def\G{\Gamma}%
\global\long\def\D{\Delta}%
\global\long\def\L{\Lambda}%
\global\long\def\T{\Theta}%
\global\long\def\O{\Omega}%
\global\long\def\R{\mathbb{R}}%
\global\long\def\N{\mathbb{N}}%
\global\long\def\P{\mathbb{P}}%
\global\long\def\Q{\mathbb{Q}}%
\global\long\def\X{\mathbb{X}}%
\global\long\def\U{\mathbb{U}}%
\global\long\def\E{\mathbb{E}}%
\global\long\def\N{\mathbb{N}}%
\global\long\def\Q{\mathbb{Q}}%
\global\long\def\cX{{\cal X}}%
\global\long\def\cY{{\cal Y}}%
\global\long\def\cA{{\cal A}}%
\global\long\def\cB{\mathscr{B}}
\global\long\def\cN{{\cal N}}%
\global\long\def\cM{{\cal M}}%
\global\long\def\cG{{\cal G}}%
\global\long\def\cF{{\cal F}}%
\global\long\def\cFol{\ol{\cal F}}%
\global\long\def\cP{{\cal P}}%
\global\long\def\es{\emptyset}%
\global\long\def\ind{\mathbf{1}}%
\global\long\def\any{\forall}%
\global\long\def\ex{\exists}%
\global\long\def\p{\partial}%
\global\long\def\cd{\cdot}%
\global\long\def\Dif{\nabla}%
\global\long\def\imp{\Rightarrow}%
\global\long\def\iff{\Leftrightarrow}
\global\long\def\up{\uparrow}%
\global\long\def\down{\downarrow}%
\global\long\def\arrow{\rightarrow}%
\global\long\def\rlarrow{\leftrightarrow}%
\global\long\def\lrarrow{\leftrightarrow}
\global\long\def\abs#1{\left|#1\right|}%
\global\long\def\norm#1{\left\Vert #1\right\Vert }%
\global\long\def\rest#1{\left.#1\right|}%
\global\long\def\inprod#1{\left\langle #1\right\rangle }%
\global\long\def\ol#1{\overline{#1}}%
\global\long\def\ul#1{\underline{#1}}%
\global\long\def\td#1{\tilde{#1}}
\global\long\def\upto{\nearrow}%
\global\long\def\downto{\searrow}%
\global\long\def\pto{\overset{p}{\longrightarrow}}%
\global\long\def\dto{\overset{d}{\longrightarrow}}%
\global\long\def\asto{\overset{a.s.}{\longrightarrow}}%

\sloppy

\maketitle
\thispagestyle{empty}
\vspace{-2em}

\begin{abstract}
We propose a  \emph{restrictiveness}  measure for economic models based on how well they fit synthetic data from a pre-defined class. This measure, together with a measure for how well the model fits real data, outlines a Pareto frontier, where models that rule out more regularities, yet capture the regularities that are present in real data, are preferred. To illustrate our approach, we evaluate the restrictiveness of popular models in two laboratory settings---certainty equivalents and initial play---and in one field setting---takeup of microfinance in Indian villages. The restrictiveness measure reveals new insights about each of the models, including that some economic models with only a few parameters are very flexible.
\end{abstract}
\newpage

\setcounter{page}{1}

\section{Introduction}
If a parametric model fits the data well, is it because the model captures structure specific to the observed data, or because the model is so flexible that it would fit almost all conceivable data?  This paper provides a quantitative measure of restrictiveness that can distinguish between these  two explanations, and is easy to compute in  a variety of applications.

Our approach for evaluating the restrictiveness of a model is to generate synthetic data sets, and  evaluate how well the model fits this synthetic data.  Some models have known properties, for example  Cumulative Prospect Theory requires that certainty equivalents for lotteries respect first-order stochastic dominance. For these models, the relevant question may not be whether the model is restrictive at all, but instead how much content it has beyond these known constraints. We define the \emph{eligible} data to be those data sets that satisfy specified background constraints, and measure a model's restrictiveness by its (normalized) average error across the eligible data.

We complement the evaluation of restrictiveness, which is based solely on synthetic data, with an evaluation of the model's performance on actual data, using the completeness measure proposed in \citet{FKLM}. Restrictiveness and completeness provide complementary perspectives, and define a Pareto frontier where models that rule out more regularities, yet capture the regularities that are present in real data, are preferred.\footnote{These are not the only considerations that matter for evaluating models, and we do not speak to other important concerns such as parameter estimation and causal inference. Nevertheless, these two measures may be relevant to those problems as well: If a model can fit almost any data set, then its good fit to a specific real data set does not necessarily mean that the model is the ``right" model.}

 Section \ref{sec:axiomatic} provides axioms for our restrictiveness measure to clarify its theoretical properties. The main axioms require that the measure is homogeneous in the unit scale used to quantify model error, and that the measure has a linearity property as the background constraints are varied. An additional ``symmetry" axiom requires that the model's ability to approximate different synthetic data sets has the same effect on the restrictiveness measure. Dropping this axiom returns a broader class of restrictiveness measures, where instead of averaging across synthetic data sets, the data sets are weighted by an analyst's prior. We  develop estimators for both the restrictiveness and completeness measures in Section \ref{Est}, and establish their asymptotic properties so that users can compute confidence intervals.

A key feature of our restrictiveness measure is that is  computable without the guidance of theoretical results about the model's implications or empirical content. This differentiates restrictiveness from measures such as the model's VC dimension, or its hit-rate and accuracy-rate as defined in \citet{Selten}.\footnote{There are representation theorems for many non-parametric theories of individual choice, and some analytic results for the sets of equilibria in games, but we are unaware of representation theorems for most functional forms that are commonly used in applied work.} (Section \ref{sec:related} reviews the related literature and relates it to our work.) The measure's tractability makes it  easy to apply to a variety of contexts, as  we demonstrate by applying it to models from three economic domains: (1) predicting certainty equivalents for binary lotteries (where we evaluate \emph{Cumulative Prospect Theory} and \emph{Disappointment Aversion}); (2) predicting initial play in matrix games (where we evaluate the \emph{Poisson Cognitive Hierarchy Model (PCHM)}, \emph{Logit PCHM}, and \emph{Logit Level-1}); and (3) predicting takeup of microfinance in Indian villages (where we evaluate linear regression models based on economically-motivated regressors, and a structural model of diffusion).\footnote{In addition to these applications, \citet{Schwaninger}  uses  our restrictiveness measure to evaluate models of bargaining with inequity aversion, \citet{EllisKarizOzbay} uses it to evaluate models of consumer demand from budget sets, and \citet{ba2023over} uses it to evaluate models of reaction to information.} The first two settings use data from the lab, our third application uses field data. 
In each of these domains, these measures reveal new insights about the models we examine, which we now summarize:

\smallskip

\textbf{Application 1: Certainty Equivalents.} We evaluate two models on a set of binary lotteries from \citet{Bruhin}: a popular three-parameter specification of Cumulative Prospect Theory \citep{CPT}, henceforth CPT, and a two-parameter specification of Disappointment Aversion \citep{gul1991}, henceforth DA. We find that  CPT performs strikingly well on the \citet{Bruhin} data, achieving a completeness of 95\%, while DA's completeness is only 27\%.

One explanation for this finding is that CPT is a much better model of risk preferences than DA. Another possibility is that CPT is simply more flexible. We thus evaluate the restrictiveness of the two models, where our background constraints are that the synthetic average certainty equivalents must lie within the range of the lotteries' possible payoffs, and must respect first-order stochastic dominance (FOSD). We find that CPT is indeed substantially  less restrictive than DA: CPT performs better than DA not only on the real data set but also on the other eligible data sets. This tells us that FOSD constitutes a large part of the empirical content of CPT on the domain of binary lotteries, while DA imposes substantial additional restrictions.\footnote{DA's low completeness suggests that these restrictions are not supported  by the experimental 
data.}

Besides comparing distinct models such as CPT and DA, restrictiveness and completeness can be compared across nested models to reveal the role played by specific parameters. Adding a parameter always at least  weakly increases completeness and decreases restrictiveness, but some parameters achieve greater improvements in completeness for the same decrease in restrictiveness. We find that several parameters lead to large drops in restrictiveness in return for only marginal improvements in completeness, suggesting that these parameters may add flexibility in the wrong directions. The CPT parameter that governs the curvature of the probability weighting function, however, achieves a large improvement in completeness compared to the flexibility it adds, so this parameter seems to  capture an important part of risk preferences. Indeed, it is the curvature of the probability weighting function that has played a key role in many of the applications of CPT to financial data (e.g., \citet{barberis2008stocks} and \citet{green2012initial}).
\smallskip

\textbf{Application 2: Initial Play in Games.}
Next, we evaluate three models on a set of $3 \times 3$ matrix games from \cite{FudenbergLiang}: the Poisson Cognitive Hierarchy Model, or  \emph{PCHM} \citep{CamererHoChong04};  \emph{Logit PCHM} \citep{LeytonBrownWright}, which allows for logistic best replies in the PCHM; and \emph{Logit Level-1}, which models the distribution of play as a logistic best reply to the uniform distribution. We impose the background constraint that strictly dominant actions are played at least as often as if by chance (i.e. with probability at least $1/3$) and that strictly dominated actions are played with probability no more than $1/3.$ We find that all three models are highly restrictive relative to these constraints, which shows that the constraints on the frequency  of strictly dominated and strictly dominant strategies are a very small part of their empirical content. The restrictiveness of Logit PCHM and Logit Level-1 is nearly identical, although Logit PCHM has two parameters while Logit Level-1 has one.

\textbf{Application 3: Diffusion on a Social Network.} Finally, we consider the prediction of microfinance takeup rates in the set of Indian villages studied by \citet{banerjee2013diffusion, banerjee2019using}, and compare
the performance of OLS regression on various economically-motivated regressors with that of an economically-motivated partially linear model built upon ``network gossip centrality.'' Here we find that the partially linear model is  dominated by a simple OLS model based on the average eigenvector centrality of leaders: the latter has higher restrictiveness  and  higher completeness. 

\medskip

Besides these specific findings about each of these economic domains, our analyses make the high-level point that it is not sufficient to count parameters to understand a model's restrictiveness.  Even with just 3 parameters,  CPT is not very  restrictive  on the domain of binary lotteries, and models with different numbers of parameters (such as Logit PCHM and Logit Level-1) turn out to be similarly restrictive. These comparisons are not obvious from the functional forms, but are easy to discover with  our restrictiveness measure.

\section{Example}

Before formally defining our measure, we use a simple example to illustrate it. Suppose there is a binary covariate $x \in \{x_0,x_1\}$ and an outcome variable $y \in [0,1]$. A \emph{data set } is an observed outcome for each covariate value, i.e., a point in $\mathbb{R}^2$, and the \emph{eligible data} $\mathcal{F}$ is  collection of possible data sets, i.e. a subset of $\mathbb{R}^2$.  A \emph{model} is also a subset of $\mathbb{R}^2$. The model \emph{explains a data set exactly} if the data set is an element of the model.

\begin{figure}[H]
\begin{center}
\includegraphics[scale=0.5]{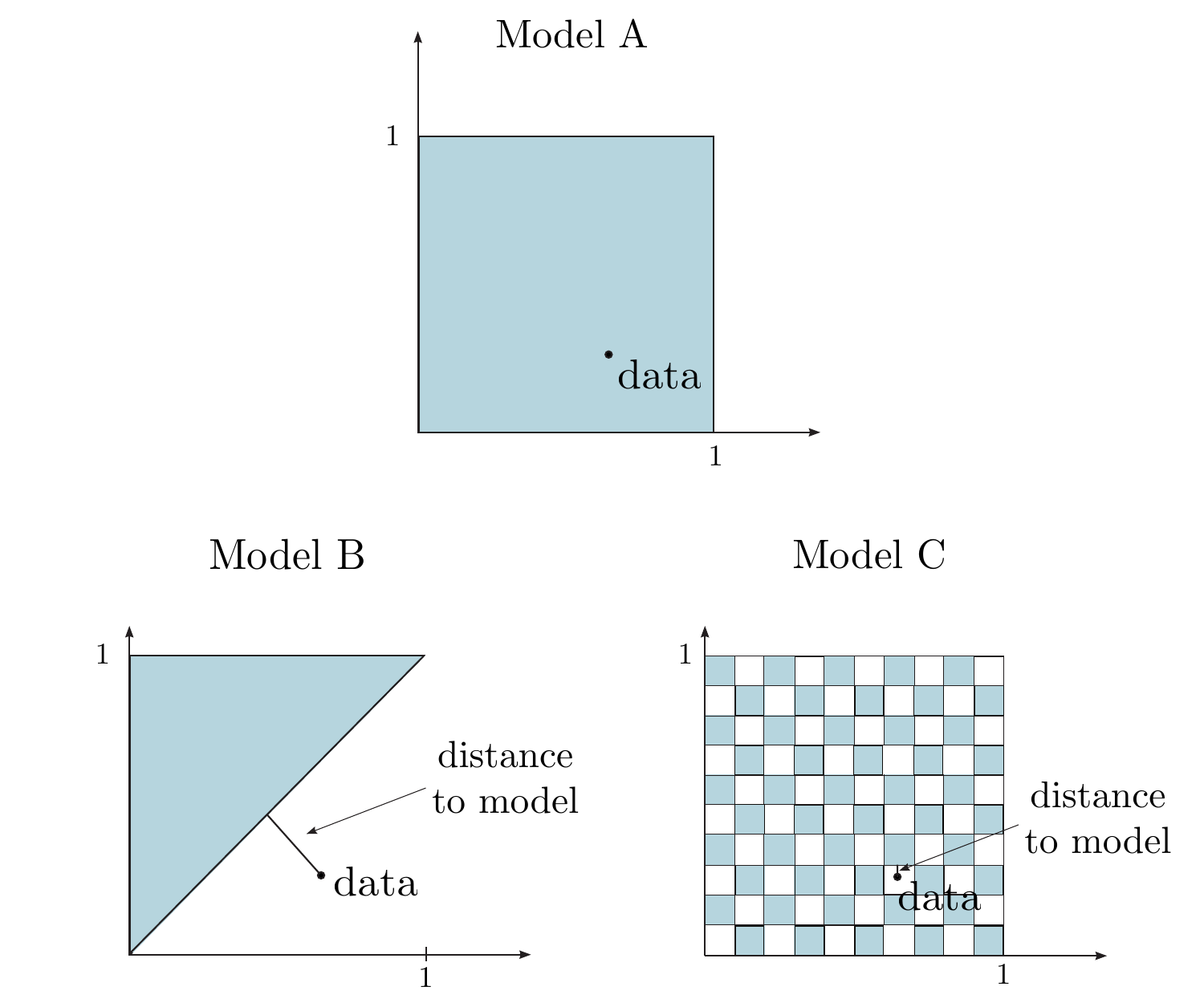}
\end{center}
 \caption{Three example models.}
\label{fig:ModelA_B}
\end{figure}

Figure \ref{fig:ModelA_B} considers eligible data $[0,1]^{2}$ and depicts three models. Model A includes all of $[0,1]^{2}$, and thus can exactly explain any (eligible) data set. Model B includes all data sets $(y_0,y_1)$ satisfying $y_1 > y_0$,  and so can only explain data sets where the outcome is higher at covariate $x_1$ than at $x_0$.  Model C discretizes the data into a grid and includes every other element of the grid.

One way of evaluating the restrictiveness of these models is the fraction of eligible data sets that they can fit exactly  \citep{Selten}. 
But evaluating restrictiveness in this way  obscures important differences between models such as B and C. Both exactly explain 50\% of the data yet Model B appears to impose a more substantive restriction. 

Our restrictiveness measure instead takes as given a measure of how well a model  approximates the  data. This  makes it computationally straightforward to estimate in  applications even when we have very little analytical guidance about the model's predictions (in contrast to the \citet{Selten} measure, which requires determining exact fit).   To estimate restrictiveness, we uniformly sample over all eligible data, evaluate the model's average approximation error to the realized datasets, and compare it to average approximation error of  a benchmark model.  For example, if  we use Euclidean distance as our measure of approximation error (as in Figure \ref{fig:ModelA_B}), and the constant model  $\{(1/2,1/2)\}$ as the  benchmark, the restrictiveness of Model B is numerically estimated to be about $0.30$, while the restrictiveness of Model C is approximately $0.02$. Thus  Model B is substantially more restrictive by our measure.

\section{Our Methodology}
Section  \ref{sec:Definition} formally defines our measure of restrictiveness. Section \ref{sec:completeness} reviews the measure of completeness from \citet{FKLM}. Section \ref{sec:Pareto} combines these concepts with the idea of a Pareto frontier of models that are undominated in completeness and restrictiveness. Section \ref{sec:Discussion} further discusses the interpretation of our restrictiveness measure. Section \ref{sec:related} describes the relationship to the literature. 

\subsection{Setup}
Our starting point is a data set of observations of $(X,Y)$, where $X$ is a \emph{covariate vector} and $Y \in \mathcal{Y}$ is an \emph{outcome}, with $\mathcal{Y}$ a compact subset of a finite-dimensional Euclidean space. We use $\mathcal{X}$ to denote the set of covariate vectors, and $P_X$ to denote the marginal distribution of $X$. We assume that $\mathcal{X}$ is finite, and $P_X$ is chosen by or known to the researcher.\footnote{In laboratory experiments the set of  features and their relative frequencies are chosen by the experimenter while in field experiments these are chosen by Nature, but in either case we treat them as known.} A \emph{prediction rule} is a function  $f: \mathcal{X} \rightarrow \mathcal{Y}$. We denote the set of all such functions  by $\ol{\mathcal{F}} \equiv \mathcal{Y}^{|\mathcal{X}|}$, and endow it with the usual topology. 

\begin{example}[Predicting an Average Outcome]
In our application to the prediction of  certainty equivalents (Section \ref{sec:App_CE}), the covariate vectors are 25 binary lotteries, each described by two prizes and their probabilities, and the outcome space is the observed average (over subjects) certainty  equivalent  for each lottery in this data set. A prediction rule is  any function from the 25 lotteries to average certainty equivalents.
\end{example}

\begin{example}[Predicting a Distribution]
In our application to initial play in 3x3 games (Section \ref{games}), the features are the 18 elements of the payoff matrix, and the outcomes are distributions over the row player's actions. A prediction rule is a map from payoff matrices to probability distributions over row player actions.
\end{example}

\subsection{Measures}
\subsubsection{ Restrictiveness} \label{sec:Definition}

We take as a primitive a  \emph{discrepancy} function $d: \ol{\mathcal{F}} \times \ol{\mathcal{F}} \rightarrow \mathbb{R}_+$ where $d(f,f')$ measures how different the two prediction rules $f$ and $f'$ are. For example, if $Y$ is a vector in $\mathbb{R}^{n}$,  a natural choice for $d$ is the expected mean-squared distance between the predictions (with respect to $P_X$), and if $Y$ is a distribution a natural choice for $d$ is the expected KL-divergence (again with respect to $P_X$).  We allow for functions $d$ that are not distances (such as KL-divergence), but  require that $d(f,f')=0$ if and only if $f=f'$. We also assume that $d$ is uniformly bounded, and that $d(\cd,f)$ and $d(f,\cd)$ are continuous almost everywhere for each $f\in\ol{\cF}$.\footnote{Given that $\cY$ is assumed to be bounded, the uniform boundedness of $d$ is a very weak requirement. The only reason that we allow for discontinuity in $d$ is to accommodate the case of $\ind\{f=f'\},$ the discrepancy function used in \cite{Selten}. We recommend in Appendix B that practitioners use a continuous discrepancy function $d$.}

We will evaluate the restrictiveness of a parametric model $\mathcal{F}_{\Theta}:=\{f_{\theta}\}_{\theta\in\Theta}\subseteq \overline{\mathcal{F}}$, where the prediction rules $f_{\theta}$ depend continuously on a parameter $\theta$ from a compact set $\Theta$.\footnote{Because $\mathcal{X}$ is assumed to be finite, $\Theta$ can viewed as a subset of a finite-dimensional Euclidean space  without loss of generality.} Restrictiveness is defined relative to a compact set of ``eligible'' rules $\mathcal{F} \subseteq \ol{\mathcal{F}}$ that reflect any constraints the model is known to have. For example, if a model is known to imply that choices respect first-order stochastic dominance, we can define $\mathcal{F}$ to be all rules with this property, and  measure the model's additional restrictiveness beyond this.  In general, the eligible set $\mathcal{F}$ consists of all prediction rules that satisfy user-specified background constraints, where the special case of  $\mathcal{F} = \ol{\mathcal{F}}$ corresponds to the question of whether $\mathcal{F}_{\Theta}$ imposes any restrictions at all.

We define the restrictiveness of a model to be   its expected discrepancy to a prediction rule $f$ drawn uniformly at random from the eligible set, normalized with respect to the expected discrepancy of a baseline prediction rule $f_{\text{base}}$. The baseline prediction rule is chosen to suit the setting, and we interpret its performance as a lower bound that any sensible model should outperform.\footnote{For example, in our application to predicting initial play in games, we define the baseline prediction rule to be a uniform distribution over actions. Note that while the choice of baseline affects the value of restrictiveness, it does not affect the comparative restrictiveness of two models on the same domain.}
\begin{definition} The restrictiveness of model $\mathcal{F}_{\Theta}$ with respect to eligible set $\mathcal{F}$ is \begin{equation} \label{eq:r}
 r(\mathcal{F}_{\Theta},\mathcal{F}) =   \frac{\mathbb{E}_{\lambda_{\mathcal{F}}}[ d(\mathcal{F}_{\Theta} ,f)] } {\mathbb{E}_{ \lambda_{\mathcal{F}}}[ d(f_{\text{base}},f) ]}
  \end{equation}
where $\lambda_{\mathcal{F}}$ denotes the uniform distribution  on $\mathcal{F}$,\footnote{Since $\mathcal{F}$ is a subset of bounded finite-dimensional Euclidean space, the uniform distribution on $\mathcal{F}$ is well-defined. Section \ref{sec:axiomatic} discusses a generalization to other distributions.} and $d(\mathcal{F}_\Theta,f) :=  \inf_{f_{\theta} \in \mathcal{F}_{\Theta}} d(f_{\theta},f)$.\footnote{When $\lambda_{\cal F}$ is interpreted as a Bayes prior, then restrictiveness can be interpreted as the ratio of Bayes risks defined with respect to the discrepancy function $d$. However, unlike in Bayesian statistics our goal is not to find a estimator whose ``Bayes risk'' is small. Indeed, a larger Bayes risk corresponds to higher restrictiveness, so all else equal we prefer models whose Bayes risk is higher.}
\end{definition}

 Normalizing with respect to a baseline has several advantages: First, it makes our  measure invariant to affine rescalings of the units of discrepancy. Second, whenever $f_{\text{base}}$ is chosen from $\mathcal{F}_{\Theta}$,  restrictiveness ranges from $0$ to 1.  A model with $r=0$ is completely unrestrictive, while a model with $r=1$ fits synthetic data no better than the baseline prediction rule does. If a model performs well on real data and is also highly restrictive, then its good performance occurs not simply  because the model can fit any data, but because it precisely identifies regularities in real behavior.

The ratio in (\ref{eq:r}) is well-defined as long as the denominator exceed zero, so we will impose this an assumption going forward:

\begin{assumption} \label{ass:denomNonZero}
$\E_{\lambda_{\mathcal{F}}}[d(f_{\text{base}},f)] > 0$.
\end{assumption}
 Section \ref{sec:axiomatic} provides axioms for the restrictiveness measure, which help to clarify the measure's theoretical properties.

\subsubsection{Completeness} \label{sec:completeness}

While restrictive models are desirable holding all else equal, a restrictive model is not  useful if it poorly fits real data. To evaluate model fit to  real data, we use the \emph{completeness} measure introduced in \citet{FKLM}. This takes as a primitive a \emph{loss function} $l: \mathcal{Y} \times \mathcal{Y} \rightarrow \mathbb{R}_+$, which is assumed to be  continuous. Let $P_{Y|X}$ denote the distribution of $Y$ given $X$, and $P:=(P_X, P_{Y|X})$ denote the joint distribution of $X$ and $Y$. The prediction rule that minimizes expected loss on the real data is given by
\[f^* \in \argmin_{f \in \overline{\mathcal{F}}} e_{P}(f)\] where
\[e_{P}(f):=\E_{P}\left[l(f(X),Y)\right] \quad \forall f\in \ol{\cal F}.\]
For example, if $\mathcal{X}$ is a set of lotteries,  $\mathcal{Y}$ are subjects' reported certainty equivalents for each lottery, and $l$ is squared error, then $f^*$ takes each lottery into its average certainty equivalent across subjects. If   $\mathcal{X}$ is a set of payoff matrices, $\mathcal{Y}$ is the set of distributions over actions,  and $l(Y,Y')$ is Kullback-Leibler divergence from $Y'$ to $Y$, then $f^*$ maps each game to the corresponding distribution over actions.

\begin{definition}[\citealp{FKLM}] \label{def:Completeness} The \emph{completeness} of model $\mathcal{F}_{\Theta}$ is defined by
\[
\kappa(\mathcal{F}_{\Theta}):=  \frac{e_{P}(f_{\text{base}})-\inf_{f\in\mathcal{F}_{\Theta}} e_{P}(f_{\theta})}{e_{P}(f_{\text{base}})-e_{P}(f^{*})}.
\]
\end{definition}

By construction,  $\kappa$ lies within the unit interval. A model with $\kappa=1$ matches the true $f^{*}$ exactly, while a model with $\kappa=0$ is no better at matching $f^{*}$ than the baseline prediction rule $f_{\text{base}}$. In the special case where discrepancy is the expected mean-squared distance $d(f,f') = \mathbb{E}_{P_X}[(f(X)-f'(X))^2]$ and the baseline prediction rule is constant at the expectation of $Y$,  $f_{\text{base}} = \E_{P}[Y]$, completeness specializes to the familiar (population) definition of $R^2$, but completeness is applicable more generally.

We report both restrictiveness $r$ and completeness $\kappa$ for each application that we consider. Completeness is defined using the loss function $l$, while restrictiveness is defined using the discrepancy function $d$.  When the discrepancy function $d$ and the loss function $l$ are ``paired'' in the sense of Online Appendix \ref{sec:Extend},\footnote{Loosely speaking, being paired means that  $d(f,f^*)$ is the difference between the error of $f$ and the error of the best mapping $f^*$.} 
then $\kappa(\mathcal{F}_{\Theta}) = 1 - r(\mathcal{F}_{\Theta},\overline{\mathcal{F}})$, so that completeness is the complement of the restrictiveness of model $\mathcal{F}_{\Theta}$ with respect to the (unconstrained) eligible set $\overline{\mathcal{F}}$.   Our first and third application use mean-squared error as the loss function and expected squared distance as the discrepancy function; our second application uses negative log-likelihood as the loss function and expected KL divergence as the discrepancy function. Both are examples of paired functions.

\subsubsection{A ``Pareto Frontier''} \label{sec:Pareto}
Our restrictiveness and completeness measures generate a  ``Pareto frontier'' consisting of models that are undominated in the sense that none of the  other models considered are simultaneously more restrictive and more complete. Although this is a very partial order, it has bite in our  Application \ref{sec:App_CE} (see  Figure \ref{fig:Risk}), as well as in the work of  \citet{EllisKarizOzbay}.

Unlike in typical economic problems, the Pareto frontier here  need not be concave, so the preferred model may not maximize a weighted sum of the two scores. For example, the frontier might consist of 3 points with scores (3/4,1/4), (1/3,1/3), and (1/4,3/4), and the analyst might prefer the model with scores $1/3$ each.  Of course, given the estimated parameter values of two models on the actual and hypothetical data sets, one could make predictions by taking pointwise combinations of the two model's predictions, which would mechanically lead to a weakly concave frontier of undominated models, but it seems hard to interpret this exercise.

While it is natural to prefer undominated models to dominated ones, it is less obvious how to aggregate the two measures to pick a preferred model, as the tradeoff between the measures is context-specific and also a matter of taste.  Nevertheless, when two models have completeness-restrictiveness values  that cannot be Pareto-ranked, one can consider the size of the improvement in completeness relative to the size of the reduction in restrictiveness. In Section \ref{subsec:Parameter} we show that adding an ``elevation''  parameter to a Cumulative Prospect Theory specification leads to  a large drop in restrictiveness in return for only
a small gain in completeness, while the 
parameter that governs the curvature of the probability weighting function leads to a sizeable improvement in completeness with only a small reduction in restrictiveness. We take this to mean that the curvature parameter  plays a more important role in capturing
risk preferences.\footnote{\citet*{ba2023over} conduct a similar exercise to compare two models which are not Pareto-ranked.}

\subsection{Discussion} \label{sec:Discussion}

\paragraph{Context dependence.} 
Restrictiveness is context-specific, in the sense that it depends on the set of feature vectors $\mathcal{X}$ and the outcome to be predicted. For example, we show that the restrictiveness of Cumulative Prospect Theory depends on the support size of the lotteries that are considered.  Evaluating the restrictiveness of a model across contexts can reveal that it   is very restrictive for one kind of prediction problem but unrestrictive for others.
An interesting direction for followup work would be to develop a measure of restrictiveness that takes into account how restrictive a model is across different contexts. For example, we might consider one model to be ``generally more restrictive'' than a second model if the distribution of restrictiveness values for the first model first-order stochastically dominates the distribution for the latter, as we find in Section \ref{sec:RobustMu}.

\paragraph{Choosing the eligible set.} The restrictiveness of a model is measured with respect to a specific eligible set $\mathcal{F} \subseteq \overline{\mathcal{F}}$, which is chosen based on what is known about the model. In Application 3, we investigate the restrictiveness of a structural model of network diffusion for predicting takeup of microfinance. Since there is relatively little known about the empirical content of this model, we define the eligible set to include all possible takeup rates, and study whether the model placed any restrictions at all.  In contrast,  the model of interest in Application 1,  Cumulative Prospect Theory, implies that any lottery that  first order stochastically dominates another must have a higher certainty equivalent. So we place this restriction on the eligible set, and see how much additional restrictiveness the model imposes. 

In general, there is not a single correct choice of eligible set. While we focus on  comparing the restrictiveness of models with respect to a given eligible set, an interesting complementary exercise is to fix a model and compare its restrictiveness relative to different eligible sets, as in Sections \ref{sec:RobustMu} and \ref{sec:RestrictivenessGames}.

\paragraph{Why the uniform distribution?} Section \ref{sec:axiomatic}, which  develops and axiomatizes a  broader class of restrictiveness measures, provides an axiom that pins down the uniform distribution.  Besides this axiom, there are many reasons to prefer the uniform distribution. First, once the eligible set is   specified, the uniform distribution on this set is pinned down (under our assumptions that $\mathcal{X}$ is finite and $\mathcal{Y}$ is a subset of finite-dimensional Euclidean space). This reduces the number of primitives to be chosen, and helps prevent cherry-picking with respect to the distribution on $\mathcal{F}$. Second, the uniform distribution is computationally easy to implement, even for eligible sets $\mathcal{F}$ with potentially complicated structures.\footnote{For example, in our  application to prediction of certainty equivalents, we build monotonicity with respect to FOSD into our definition of $\mathcal{F}$, and it is straightforward to sample uniformly from $\mathcal{F}$ by first sampling from a larger space without the monotonicity constraints, and then only keeping the draws that satisfy the monotonicity constraints. In contrast, non-uniform weightings over $\mathcal{F}$ require additional specification of how exactly $\mathcal{F}$ is parametrized, making the dependence of restrictiveness on $\mathcal{F}$ less transparent.}   Finally, our use of the uniform distribution follows up on \citet{Becker}'s proposal of the uniform distribution over budget-exhausting bundles as a model of irrational consumer behavior, and parallels \citet{Selten}'s use of area (see Section \ref{sec:related}).

\paragraph{Why are more restrictive models better?}

Our paper takes the perspective that restrictiveness is inherently desirable: if two models have the same level of predictive accuracy, we should prefer the one that  imposes more restrictions to the more flexible alternative. A potential reason for this preference is that models are  often meant to capture behavior in  related but not-identical domains. Given enough data, models that are very unrestrictive will fit any specific data set well, but may do so by learning idiosyncratic details of those datasets that do not in fact transfer across settings. In contrast, if a highly specific and structured model happens to fit a data set well, this  may generate more confidence that the model's structure extends to  other settings.\footnote{\citet{AFLW} compare the transfer performance of highly flexible black box models with less flexible economic models in a setting similar to our Application 1, and find that the black box models transfer more poorly.}

\subsection{Relationship to the Literature} \label{sec:related}

Our restrictiveness measure generalizes the notion of ``observational restrictiveness" introduced in \citet{Koopmans}, where a model is observationally restrictive if  the distributions permitted by the model are a proper subset of the distributions that would otherwise be possible.\footnote{As \citet{Koopmans} points out, a special case of an observationally restrictive specification is an overidentifying restriction. See e.g.  \cite{sargan1958estimation}, \cite{hausman1978specification}, \cite{hansen1982large}, and \cite{chen2018overidentification} for econometric tests of overidentification.} A model that is not observationally restrictive can perfectly match all data and so has $r=0$. Our restrictiveness measure allows us to quantify just how restrictive a model is.

Section 2 already discussed \citet{Selten}'s measure of flexibility,  and showed how its use of exact instead of approximate fit can lead to very different conclusions than ours.  The Selten measure has been  applied by \citet{BeattyCrawford}, \citet{Hey}, and \citet{HarlessCamerer}, and \citet{BlowBrowningCrawford} among others, to understand the restrictiveness of nonparametric economic models. It is typically difficult to determine  whether a parametric model can exactly fit a given data set without the guidance of prior analytical results, while  our measure is easy to compute in a variety of applications.\footnote{\citet{BeattyCrawford}  analytically derives the set of budget shares that are consistent with GARP, and \citet{HarlessCamerer} uses results about generalized expected utility theories to determine whether choices between specially chosen pairs of lotteries (for example, lotteries sharing a common ratio of outcome probabilities) are consistent with those theories. But we do not know how to analytically determine the predictions that are consistent with PCHM or the structural model of microfinance takeup in Application 3.}

In considering approximate rather than exact fit, our approach is related to papers that measure the distribution of the Afriat index \citep{Choietal,Polisson}.\footnote{\citet{Choietal} and \citet{Polisson} relax the implications of expected utility maximization using Afriat's ``efficiency index'' as an analog of our loss function. They compare the distribution of the efficiency indices of the actual subjects with its counterpart in randomly generated data.} These approaches are motivated by the testing of rationality of choices; our aim here is to show that similar techniques can be applied to a substantially broader class of models. \citet{BeattyCrawford} propose an alternative ``smoothed out" version of \citet{Selten}'s measure for the revealed preference setting that resembles restrictiveness, except that it does not allow for restrictions on the eligible data and  normalizes by reference to a worst case.\footnote{Another approach for model selection that does not require exact fit is \citet{ClippelRozen}'s suggestion to select  models by comparing the ratio of the likelihood of observing the real data under the specified model to the likelihood under a uniform distribution over all possible models.}

Our use of synthetic data to evaluate restrictiveness is similar to the use of simulated data to evaluate the power of a hypothesis test, as in \citet{Bronars} and \citet{andreoni2013power}. Their power measures are based on particular specifications of the alternative hypothesis, while we focus on an aggregate measure over a class of ``alternative hypotheses.'' Moreover, because  our objective is to measure the content of a model's restrictions and not hypothesis testing, we use approximate rather than exact fit.

Our measure is  related to various measures from computer science, statistics, and econometrics, but differs in a few key ways. First, compared to classic measures for the complexity of function classes, such as VC dimension, Rademacher complexity, and metric entropy,  our measure can be computed without analytical results about the empirical content of the estimated  model.

Second, compared to measures such as empirical Rademacher complexity, AIC, and BIC, which are often used for model selection, our restrictiveness measure does not depend on the observed data and is not indexed to sample size.\footnote{We could loosely interpret our restrictiveness measure as analogous to a limiting case of Rademacher complexity for large samples, where we use the  discrepancy function $d$, rather than correlation, to measure the model's ability to fit the synthetic data.} This reflects a difference in objectives: A primary goal of model selection is to avoid overfitting a complex model to a finite (and small) quantity of data, while our objective is to provide a measure of restrictiveness that does not depend on the quantity of data used to estimate it.\footnote{Specifically, our measure does not depend on the number of observations $(x,y)$ in the data or on the values of the $y$'s, though it does depend on the feature set $\mathcal{X}$.} Relatedly, while previous metrics aggregate a notion of completeness with some notion of restrictiveness,\footnote{For example, the AIC combines the log-likelihood, which is about fitness to real data (corresponding to ``completeness") and the number of parameters, which is about the flexibility of the model without reference to real data (corresponding to ``restrictiveness") in an additive way}  we trace the associated  Pareto frontier (see Section \ref{sec:Pareto}).

\section{Axiomatic Foundation for Restrictiveness} \label{sec:axiomatic}

This section provides an axiomatixation for the un-normalized version of the restrictiveness measure (i.e., the numerator of (\ref{eq:r})), which we call \emph{approximation error}.  Readers primarily interested in applications of the measure can skip ahead to the next section.

We endow the set $\overline{\mathcal{F}}$ with the Lebesgue $\sigma$-algebra and a $\sigma$-finite measure $\mu$, which can be interpreted as the analyst's prior. An approximation error $e$ takes as input the model $\mathcal{F}_{\Theta}\subseteq \overline{\mathcal{F}}$, a compact set of eligible prediction rules $\mathcal{F}\subseteq \overline{\mathcal{F}}$, and a  discrepancy function $d$. The quantity $e(\mathcal{F}_{\Theta},\mathcal{F},d)$ is interpreted as the approximation error of the model $\mathcal{F}_{\Theta}$ to the eligible set $\mathcal{F}$, where the quality of the approximation is measured using $d$. 
We would like for this approximation error function to satisfy the following axioms. First, approximation error should always be nonnegative.

\begin{axiom}[Nonnegativity] \label{axiom:Nonnegative} For every model $\mathcal{F}_\Theta$, eligible set $\mathcal{F}$, and discrepancy $d$, $e(\mathcal{F}_\Theta,\mathcal{F},d)\geq 0$.\end{axiom}

Second, if one model is better able to approximate every eligible prediction rule than another, the first model has lower approximation error. 

\begin{axiom}[Monotonicity] \label{axiom:Monotone} Fix any set of eligible mappings $\mathcal{F}$. If the sets $\mathcal{F}_{\Theta_1}$ and $\mathcal{F}_{\Theta_2}$ satisfy
 $d(\mathcal{F}_{\Theta_1},f) \geq d(\mathcal{F}_{\Theta_2},f)$ for all $f \in \mathcal{F}$, 
  then $e(\mathcal{F}_{\Theta_1},\mathcal{F},d) \geq e(\mathcal{F}_{\Theta_2},\mathcal{F},d)$.
 \end{axiom}

 Third, any linear rescaling of the units of $d$ is inherited  by the approximation error, and a linear rescaling of the discrepancy between a model $\mathcal{F}_{\Theta}$ to each prediction rule $f$ leads to the same value of approximation error as rescaling the units of the discrepancy $d$.

 \begin{axiom}[Homogeneity] \label{axiom:Rescale} (a)  Fix any model $\mathcal{F}_\Theta$, set of eligible prediction rules $\mathcal{F}$, and discrepancy $d$. Then
 $e(\mathcal{F}_\Theta,\mathcal{F},\alpha \cdot d) = \alpha \cdot e(\mathcal{F}_\Theta,\mathcal{F},d)$ for every $\alpha \in \mathbb{R}_+$\\
 (b) Fix any set of eligible prediction rules $\mathcal{F}$ and discrepancy $d$. If $\mathcal{F}_{\Theta_1}$ and $\mathcal{F}_{\Theta_2}$ satisfy
 $d(\mathcal{F}_1,f) = \alpha \cdot d(\mathcal{F}_2,f)$ for all $f\in \mathcal{F},$
 then $e(\mathcal{F}_{\Theta_1},\mathcal{F},d) = e(\mathcal{F}_{\Theta_2},\mathcal{F}, \alpha \cdot d).$
 \end{axiom}

Fourth, consider constraining the set of eligible prediction rules $\mathcal{F}$ to a subset $\mathcal{F}_1$ or its complement $\mathcal{F}_2$. The \emph{ex post} approximation errors of a model $\mathcal{F}_{\Theta}$ with respect to either of these new eligible sets is, respectively, $e(\mathcal{F}_{\Theta},\mathcal{F}_1, d)$ or $e(\mathcal{F}_{\Theta},\mathcal{F}_2, d)$. The subsequent axiom says that the ex ante approximation error $e(\mathcal{F}_\Theta,\mathcal{F},d)$ is a convex combination of the ex post approximation errors, where each ex post subset contributes to the ex ante approximation error in proportion to its measure.

 \begin{axiom}[Linearity] \label{axiom:Convexity}
  For any sequence of disjoint measurable sets $\mathcal{F}_{\Theta_1},\mathcal{F}_{\Theta_2}, \dots $ whose union $\mathcal{F}_\Theta \equiv \cup_{i=1}^\infty \mathcal{F}_{\Theta_i}$ has strictly positive measure,
 \[e(\mathcal{F}_\Theta,\mathcal{F},d) = \sum_{i=1}^\infty \frac{\mu( \mathcal{F}_{\Theta_i})}{\mu(\mathcal{F})} \cdot e(\mathcal{F}_{\Theta_i},\mathcal{F}, d) \quad \forall \mathcal{F},d.\]
 \end{axiom}

Finally, permuting the various discrepancies between the model and the eligible prediction rules $f$ does not affect the overall approximation error. This reflects a ``principle of indifference" over the eligible prediction rules.

 \begin{axiom}[Symmetry] \label{axiom:Symmetry} Fix any eligible set $\mathcal{F}$ and any bijection $\tau$ from $\mathcal{F}$ to itself.  Consider two sets $\mathcal{F}_{\Theta_1}$ and $\mathcal{F}_{\Theta_2}$ where
 $d(\mathcal{F}_{\Theta_1},f) = d(\mathcal{F}_{\Theta_2},\tau(f))$ for all $ f \in \mathcal{F}.$ Then $e(\mathcal{F}_{\Theta_1},\mathcal{F},d) =e(\mathcal{F}_{\Theta_2},\mathcal{F},d).$
 \end{axiom}

\begin{proposition} \label{prop:axiom} An approximation error $e$ satisfies Axioms \ref{axiom:Nonnegative}-\ref{axiom:Convexity} if and only if there is a function $c: \overline{\mathcal{F}} \rightarrow \mathbb{R}$ such that 
\begin{equation} \label{eq:e}
e(\mathcal{F}_\Theta,\mathcal{F},d) =   \mathbb{E}_{f \sim \mu_{\mathcal{F}}} \left[c(f) \cdot \inf_{g \in \mathcal{F} } d(g,f) \right] \quad \forall \mathcal{F}_\Theta,\mathcal{F},d
\end{equation}
where $\mu_{\mathcal{F}}$ denotes the measure $\mu$ conditional on the event $\mathcal{F}$. If additionally $e$ satisfies Axiom \ref{axiom:Symmetry}, then 
\begin{equation} \label{e:Uniform}
e(\mathcal{F}_\Theta,\mathcal{F},d) =   \mathbb{E}_{f \sim \lambda_{\mathcal{F}}} \left[\inf_{g \in \mathcal{F} } c \cdot d(g,f) \right] \quad \forall \mathcal{F}_\Theta,\mathcal{F},d
\end{equation}
for a positive constant $c$, where $\lambda$ denotes the Lebesgue measure on $\overline{\mathcal{F}}$. 
\end{proposition}

Our restrictiveness measure assumes (\ref{e:Uniform}), and normalizes the approximation error of  model $\mathcal{F}$ relative to the approximation error of the baseline  $f_{\text{base}}$.

\section{Computation and Estimation}\label{Est}

We now discuss how to implement our approach in practice. Recall that we restrict $\mathcal{X}$ to be finite, so $\overline{\mathcal{F}}$ is finite-dimensional.

\paragraph{Computing Restrictiveness}

The following is an algorithm for computing $r$: Sample $M$ times independently from a uniform distribution on the eligible set $\mathcal{F}$. For each sampled $f_m\in\cF$, compute $d(\mathcal{F}_{\Theta},f_m)$ and $d(f_{\text{base}},f_m)$.
Then
$$\hat{r}_M := \frac{\frac{1}{M}\sum_{m=1}^{M}d(\mathcal{F}_{\Theta},f_m)}{\frac{1}{M}\sum_{m=1}^{M}d(f_{\text{base}},f_m)}$$
is an estimator for restrictiveness $r = r(\mathcal{F}_{\Theta},\mathcal{F})$. In principle, the number of simulations we run, $M$, can be arbitrarily large, so $\hat{r}_{M}$ can be made arbitrarily close to $r$. Moreover, it is straightforward to obtain the formula for the asymptotic standard error of the simulated $r$, based on which confidence intervals can be constructed.\footnote{Under Assumption \ref{ass:denomNonZero},
   $\sqrt{M}\left(\hat{r}_{M}-r\right)/\hat{\s}_{\hat{r}}\dto\cN\left(0,1\right)$,
   where the asymptotic variance estimator $\hat{\s}_{\hat{r}}^2$ is defined by
$\hat{\s}_{\hat{r}}^{2}:=\left[\hat{\s}_{\cG}^{2}-2\hat{r}\hat{\s}_{\cG,f_{\text{base}}}+\hat{r}^{2}\hat{\s}_{f_{\text{base}}}^{2}\right]/\left[\left(\frac{1}{M}\sum_{m=1}^{M}d(f_{\text{base}},f_m)\right)^{2}\right]$, with $\hat{\s}_{\cG}^{2}$ being the sample variance of $d(\cG,f_m)$, $\hat{\s}_{f_{\text{base}}}^{2}$ the sample variance of $d(f_{\text{base}},f_m)$, and $\hat{\s}_{\cG,f_{\text{base}}}^{2}$ the sample covariance of $d(\cG,f_m)$ and $d(f_{\text{base}},f_m)$, across $m=1,...,M$. We note that the standard error here simply measures the approximation error of $r$ based on a finite number of simulations and do not reflect randomness in experimental data.}

\paragraph{Estimating Completeness} \label{sec:estimateComplete}

Suppose that the analyst has access to a finite sample of data $\left\{ Z_{i}:=\left(X_{i},Y_{i}\right)\right\} _{i=1}^{N}$ drawn from the unknown true distribution $P^{*}$. To estimate completeness, which is defined based on the loss function $l$ introduced in Section \ref{sec:completeness}, we use $K$-fold cross-validation to estimate the out-of-sample
prediction error of the model.

(Our applications make the standard choice of $K=10$.)
Specifically, we randomly divide ${\bf Z}_{N}=(Z_1, \dots, Z_N)$ into $K$ (approximately)
equal-sized groups. To simplify notation, assume that $J_{N}=\frac{N}{K}$
is an integer. Let $k\left(i\right)$ denote the group number of observation
$Z_{i}$, and fix an arbitrary set of maps  $\widetilde{\cal F}$. In the $k$-th fold of cross-validation, we will use the observations in group $k$ for testing and the remaining observations for training. 

For each group $k=1,...,K$, define $\hat{f}^{-k} :=\arg\min_{f\in \widetilde{{\cal F}}}\frac{1}{N-J_{N}}\sum_{k\left(i\right)\neq k}l(f,Z_i)$ to be the minimizer in $\widetilde{\cal F}$ on the $k$-th training set (i.e., all observations  outside of group $k$), and $\hat{e}_{k}  :=\frac{1}{J_{N}}\sum_{k\left(i\right)=k}l\left(\hat{f}^{-k},Z_i\right)$ to be the out-of-sample error on the $k$-th test set. Then the average test error across the $K$ folds, 
$\hat{e}_{CV}\left(\widetilde{{\cal F}}\right)  :=\frac{1}{K}\sum_{k=1}^{K}\hat{e}_{k}$, is an estimator for the unobservable expected error of the best prediction rule from class $\widetilde{\mathcal{F}}$. Setting $\widetilde{\cal F}$ to be $\overline{\mathcal{F}}$, $\cG$, or $\{f_{\text{base}}\}$, we can compute  $\hat{e}_{CV}\left(\overline{\mathcal{F}} \right)$, $\hat{e}_{CV}\left(\cG\right)$ and $\hat{e}_{CV}\left(f_{\text{base}}\right)$ from the data, leading to the following estimator for $\kappa$:
\[
\hat{\kappa}= 1 - \frac{\hat{e}_{CV}\left(\cG\right)-\hat{e}_{CV}\left({\cF^*}\right)}{\hat{e}_{CV}\left(f_{\text{base}}\right)-\hat{e}_{CV}\left({\cF^*}\right)}.
\]

It is crucial that the denominator in $\hat{\kappa}$ does not vanish
asymptotically, so we impose the following assumption:
\begin{assumption}[Baseline is Imperfect] \label{assu:naive}
  $e_{P}(f_{\text{base}})-e_{P}(f^*) > 0$.
\end{assumption}
This assumption says that the baseline  prediction rule performs strictly worse in expectation than the best prediction rule so there is some room for a model to do better. We show that $\hat{\kappa}$ is asymptotically
normal by adapting Proposition 5 in \cite{austern2020asymptotics}.

\begin{proposition}
\label{ANorm_kstar} Under Assumption \ref{assu:naive} and some regularity conditions,\footnote{See Appendix \ref{sec:asym_CV} for details of these assumptions.}
$\sqrt{N}\left(\hat{\kappa}-\kappa\right)/{\hat{\s}_{\hat{\kappa}}}\dto\cN\left(0,1\right)$,
where the variance estimator $\hat{\s}^2_{\hat{\kappa}}$ is as  defined in Appendix \ref{app:VarEst}.
\end{proposition}

\section{Application 1: Certainty Equivalents \label{sec:App_CE}}

\subsection{Setting}

Our first application is to the prediction of  certainty equivalents for a set of 25 binary lotteries from \citet{Bruhin}. Each lottery is described as a tuple $x=(\overline{z},\underline{z},p)$, where $\overline{z}>\underline{z} \geq 0$  are the possible prizes,  and $p$ is the probability of the larger prize. Each observation consists of a lottery and a reported certainty equivalent by a given subject, so we can describe the feature space $\mathcal{X}$ by the 25 lottery tuples $(\overline{z},\underline{z},p)$ in the \citet{Bruhin} data, and the outcome space by $\mathcal{Y}=\mathbb{R}$. Note that the residual uncertainty in $Y$ conditional on $X$ reflects heterogeneity in certainty equivalents reported across subjects for the same lottery.

We predict the average  certainty  equivalent  (over subjects) for each lottery in this data set.  A prediction rule for this problem is any function $f: \mathcal{X} \rightarrow \mathbb{R}$ from the 25 lotteries to their average certainty equivalents, and the discrepancy between two mappings is defined to be their average mean-squared distance 
$d(f,f') = \frac{1}{\vert \mathcal{X} \vert} \sum_{x \in \mathcal{X}} (f(x)-f'(x))^2.$
 
We evaluate the restrictiveness and completeness of two economic models. First we consider a three-parameter version of  \emph{Cumulative Prospect Theory} indexed by $\theta=(\alpha,\gamma,\delta)$, which specifies a utility  
$ w(p)v(\overline{z}) + (1-w(p)) v(\underline{z})$
for each lottery $(\overline{z},\underline{z},p)$, where
\begin{equation} \label{eq:utility}
v(z)= z^{\alpha},\quad w(p)= \frac{\delta p^\gamma}{\delta p^\gamma + (1-p)^\gamma}.
\footnote{This parametric form for $w(p)$ was used by \citet{goldstein1987expression} and
\citet{lattimore1992influence}. } 
\end{equation}

The predicted certainty equivalent of a binary lottery is then given by 
$f_\theta(\overline{z}, \underline{z},p) =v^{-1}\left( w(p)v(\overline{z}) + (1-w(p)) v(\underline{z})\right).$ 
Following the literature, we restrict $\alpha,\gamma \in [0,1]$, and $\delta\geq 0$. We specify  $\mathcal{F}$ as the set of all such functions $f_{\t}$ with parameters $\t$ in this range, and refer to this model simply as CPT. As a baseline, we consider the function $f_{\text{base}}$ that maps each lottery into its expected value, corresponding to $\alpha=\gamma=\delta=1$.

Second, we consider the \emph{Disappointment Aversion} model of \citet{gul1991}, using a parametrization proposed in \citet{routledge2010generalized}  with the parameters $\lambda=(\alpha,\eta)$,  where $\alpha \in [0,1]$ and $\eta >-1$.\footnote{To facilitate comparison with CPT, we depart slightly from \citet{routledge2010generalized} by imposing the functional form $v(z)=z^\alpha$ instead of $v(z) = z^\alpha/\alpha$.} The value function for money is the same as in (\ref{eq:utility}), but the probability weighting function is given instead by
$\widetilde{w}(p) = \frac{p}{1+(1-p)\eta}.$ There are two parameters: $\alpha$ again reflects the curvature of the utility function, while  $\eta>0$ corresponds to  ``disappointment aversion,'' i.e. aversion to  realizations of the lottery that are worse than its certainty equivalent.  Here the predicted certainty equivalent is
$f_\lambda(\overline{z},\underline{z},p)=v^{-1}(\widetilde{w}(p)v(\overline{z}) + (1-\widetilde{w}(p))v(\underline{z})).$  
 \noindent We specify  $\mathcal{F}_{\Lambda}$ as the set of all such functions and refer to this model as DA. Again, we use expected value as the  baseline prediction, which corresponds to $\alpha=1$ and $\eta=0$ in DA.

\subsection{Completeness}

We evaluate completeness using mean-squared error as the loss function, i.e., if the reported certainty equivalent is $y$ when the model predicts $\hat{y}$, the loss in that observation is $(\hat{y}-y)^2$.\footnote{This loss function is paired to the average mean-squared discrepancy function we used for measuring restrictiveness, see Appendix \ref{sec:Extend} for details.} CPT achieves a striking out-of-sample performance for 
predicting certainty equivalents in  the \citet{Bruhin} data: it is 95\% complete.\footnote{\citet{FKLM} reports a similar finding for a sample of gain-domain and loss-domain lotteries.} Thus, the model achieves almost all of the possible improvement in prediction accuracy over the baseline.\footnote{This finding is consistent with  \citet{PeysakhovichNaecker}'s result that CPT approximates the predictive performance of lasso regression trained on a high-dimensional set of features.} In contrast, DA is only 27\% complete on the same data. One explanation is  that CPT more precisely captures the observed risk preferences in the data than DA, but another possibility is that CPT is flexible enough to mimic most functions from binary lotteries to certainty equivalents, while DA imposes more substantial restrictions. These explanations have very different implications for how to interpret CPT's empirical success compared to DA's.

\subsection{Restrictiveness} \label{sec:CPTr}

To distinguish between these explanations, we now compute the restrictiveness of the two models. We define the eligible set to be all prediction rules satisfying the following criteria:
\begin{enumerate}
    \item[(i)] $\underline{z} \leq f(\overline{z},\underline{z},p) \leq \overline{z}$;
    \item[(ii)] If $\overline{z} \geq \overline{z}'$, $\underline{z}\geq \underline{z}'$, and $p\geq p'$ with at least one ``$\geq$" strict, then $f(\overline{z},\underline{z},p)> f(\overline{z}',\underline{z}',p').$
\end{enumerate}
Constraint (i) requires that the certainty equivalent is within the range of the possible payoffs, while (ii) is equivalent to monotonicity with respect to first-order stochastic dominance.\footnote{The CDF of a binary lottery  with $\ol{z}>\ul{z}$ and $0<p<1$ is $F(z) = (1-p)\ind\{ \ul{z} \leq z < \ol{z}\} + \ind\{z\geq \ol{z}\}$, which is weakly decreasing in $(\ol{z},\ol{z},p)$ for all $z$, so $(\ol{z},\ol{z},p)$ FOSD $(\ol{z}',\ol{z}',p')$ if and only if $(\ol{z},\ol{z},p) \gneqq (\ol{z}',\ol{z}',p').$ There are many pairs of lotteries in the \citet{Bruhin} lottery data that can be compared via (ii), so these conditions are not vacuous.}

Table \ref{tab:CPTDA} reports the completeness and restrictiveness of both models.

\begin{table}[H]

\centering{}%
\begin{tabular}{@{}l@{}lcc@{}}
\hline 
 & \# Param & \multicolumn{1}{c}{Restrictiveness} & \multicolumn{1}{c}{Completeness}\tabularnewline
\hline 
\hline 
 CPT & $\quad$$\quad$ 3 &  0.28 & 0.95 
 \tabularnewline
 & &  (0.003) & (0.02)     \tabularnewline
DA & $\quad$$\quad$ 2 & 0.47 &   0.27 \tabularnewline
&  & (0.006) & (0.06)  \tabularnewline
\hline 
\end{tabular}
  \caption{\footnotesize{Completeness for both models is estimated on the real data, which includes reported certainty equivalents by each of 179 subjects. Standard errors for the completeness estimates are computed using a block bootstrapping procedure that clusters together all observations from the same subjects, see Appendix \ref{app:table}. Restrictiveness is estimated from 1000 simulations.}} \label{tab:CPTDA}
\end{table}

The restrictiveness of CPT is $0.28$, so on average CPT's approximation error is about one fourth of the error of the expected value.  DA is more restrictive, with an average approximation error almost one half of the error of the baseline.  Thus the two models are not directly comparable: CPT performs substantially better for predicting the real data, but would have performed well out-of-sample given sufficient data from almost any underlying data-generating process that respects first-order stochastic dominance. DA rules out more behaviors that satisfy first-order stochastic dominance, but in doing so is unable to well approximate the actual \citet{Bruhin} data.

\subsection{The Role of a Parameter} \label{subsec:Parameter}
    
In addition to comparing  models such as CPT and DA, our approach can be used to learn more about the role played by specific parameters.
Adding a parameter must at least weakly decrease restrictiveness and increase completeness, but we find that parameters can differ substantially in  their effectiveness in trading off between these two goals. We also show that models with the same number of parameters can have very different levels of restrictiveness, and thus a simple parameter count is substantively less informative than our measure.

Specifically, we consider alternative specifications of CPT and DA with fewer  parameters. Some of these specifications have been studied in the literature: CPT($\alpha,\gamma$), with $\delta=1$, is used in \citet{karmarkar1979}\footnote{This specification with weighting function $w(p)=\frac{p^\gamma}{p^\gamma+(1-p)^\gamma}$ is very similar to one used in \citet{CPT}, where the weighting function was $w(p) = \frac{p^\gamma}{p^\gamma+(1-p)^\gamma)^{1/\gamma}}$.}; CPT($\gamma,\delta$), with $\alpha=1$, corresponds to a risk-neutral CPT agent whose utility over money is $u(z)=z$ but exhibits nonlinear probability weighting; CPT($\alpha$), with   $\delta=\gamma=1$, corresponds to an Expected Utility decision-maker whose utility function is as given in (\ref{eq:utility}), and is also equivalent to DA($\alpha$).\footnote{See the survey \citet{fehrdudaepper} for further discussion of these  different parametric forms, and others which have been used in the literature.} The model CPT($\gamma$), with $\alpha=\delta=1$, and CPT($\delta$), with $\alpha=\gamma=1$ have not been studied in the literature, but we report them for comparison. We also consider DA($\eta$) as in \citet{gul1991}, with $\alpha=1$, which corresponds to a disappointment-averse  decision maker whose utility is linear in  money.

        \begin{figure}[h] 
         \centering
         \includegraphics[scale=0.45]{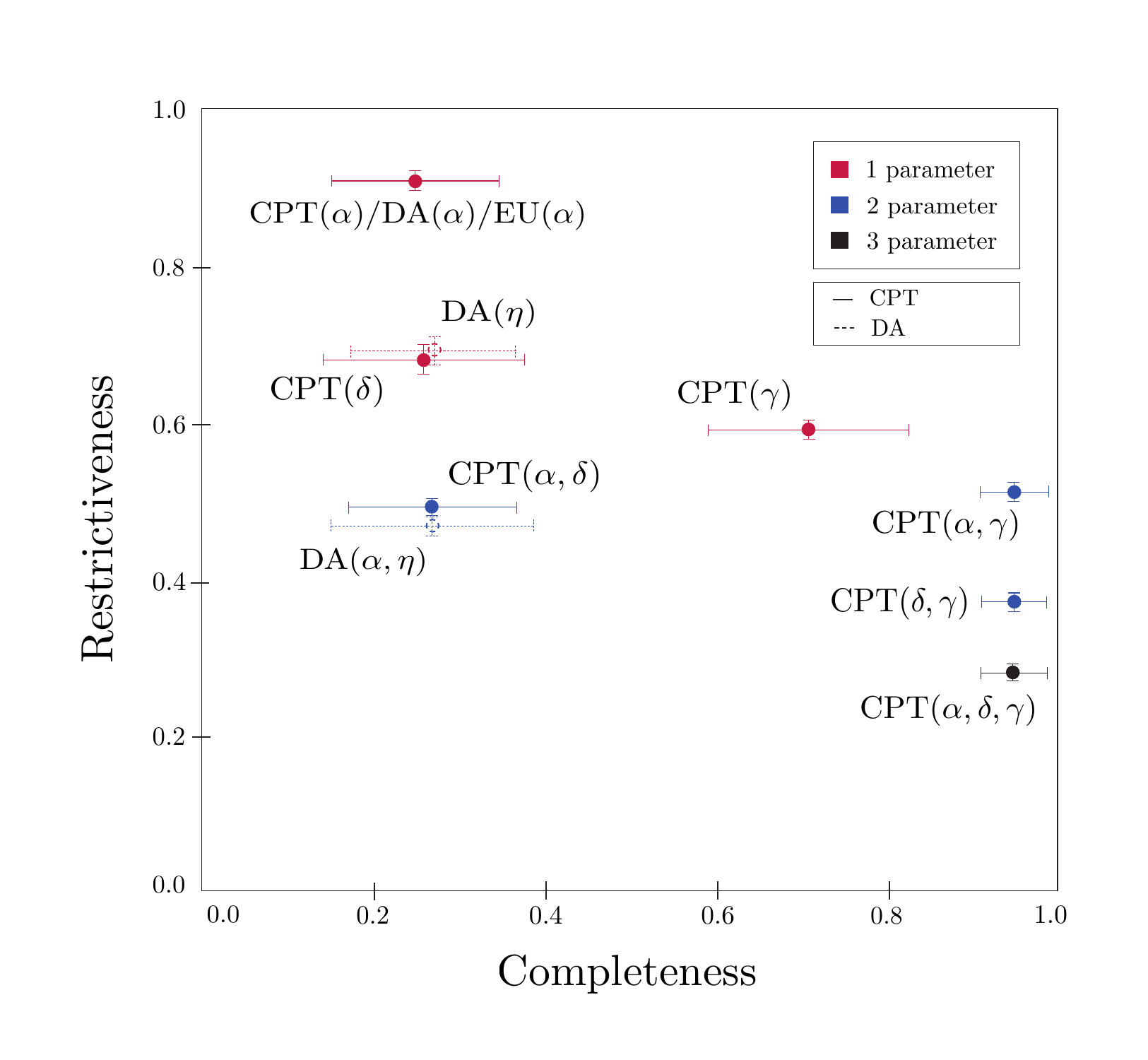}
           \caption{\footnotesize{Comparison of models by their completeness and restrictiveness.}} \label{fig:Risk}
         \end{figure}

     Figure \ref{fig:Risk} plots restrictiveness and completeness for these alternative specifications, which reveals that some  specifications fall in the interior of the restrictiveness-completeness Pareto frontier introduced in Section \ref{sec:Pareto}: Each of CPT($\alpha,\delta$) and DA($\alpha,\eta$) are dominated, in the sense that another model is simultaneously more complete and also more restrictive.\footnote{Each of  CPT($\alpha,\delta$) and DA($\alpha,\eta$) is less complete and less restrictive than the single parameter model CPT($\gamma$), and these differences are statistically significant. (See also Table \ref{tab:gains} in Online Appendix \ref{app:table}.)} The figure also reveals substantial dispersion in the restrictiveness of these specifications (ranging from $r=0.28$ to $r=0.92$), even though  all of the specifications use only a small number of parameters. This observation  emphasizes the distinction between our method and a simple  parameter count. 
     
 By looking more specifically at how restrictiveness and completeness vary across two nested specifications, we can better understand the role that any specific parameter plays. Figure \ref{fig:role-parameters} shows that the different parameters for probability weighting are not equally effective. Adding the  parameter $\delta$, which governs the elevation of the probability weighting curve, to any specification of CPT leads to a large drop in restrictiveness in return for only a small gain in completeness. We find a similar result for the ``disappointment aversion" parameter $\eta$ in DA, which barely improves upon the completeness of DA$(\alpha)$, but leads to a substantial drop in restrictiveness.     
In contrast, the parameter $\gamma$, which governs the curvature of the probability weighting function, appears to play an important role in capturing risk preferences: Adding $\gamma$ to any CPT specification leads to a sizeable improvement in completeness at the cost of a modest reduction in restrictiveness. This supports previous  findings that  probability distortions play an important role in fitting experimental and  field data  \citep{SnowbergWolfers,fehrdudaepper,Donoghue}.

        \begin{figure}[h] 
         \centering
         \includegraphics[scale=0.25]{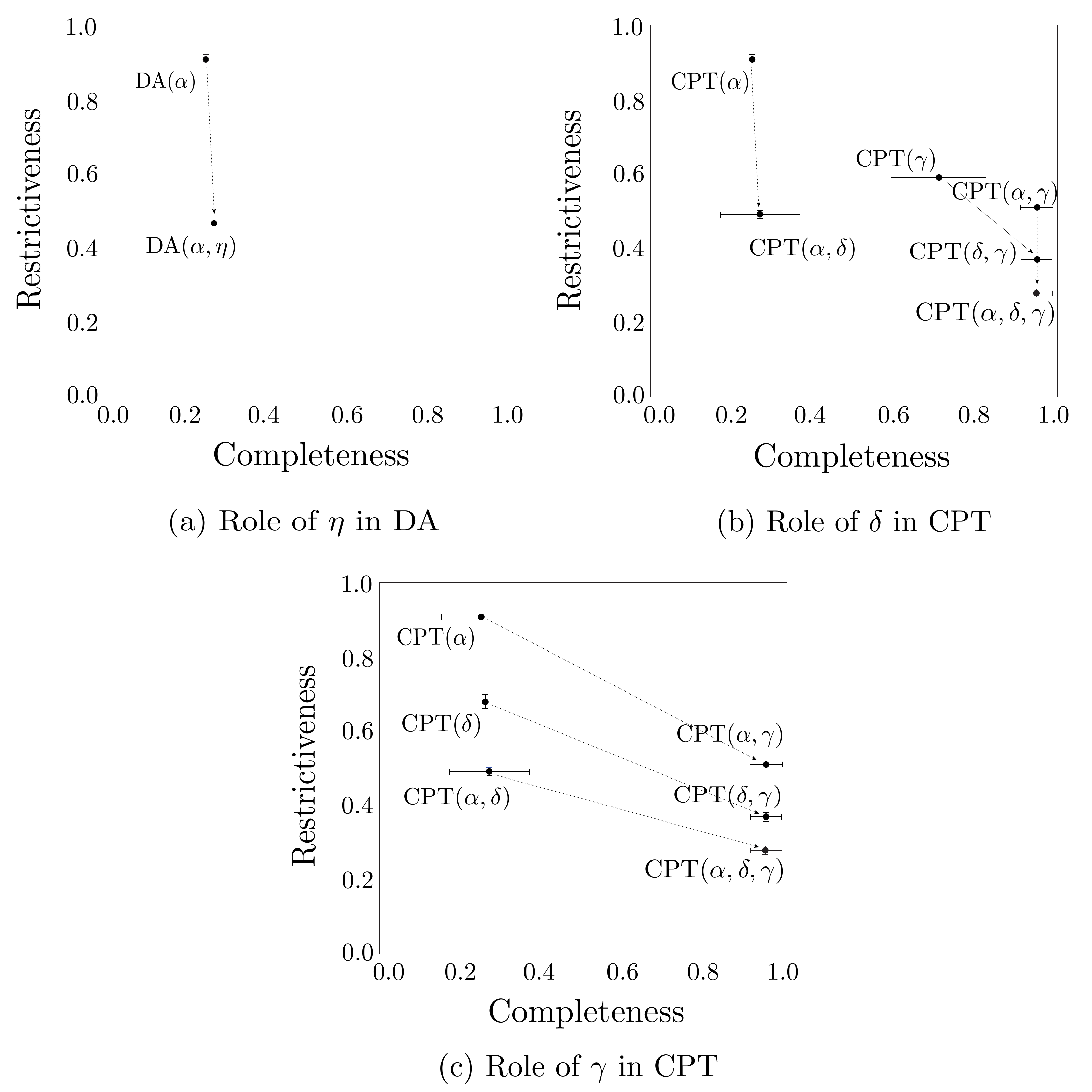}
           \caption{\footnotesize{Impact of the probability weighting parameters on completeness and  restrictiveness. }} \label{fig:role-parameters}
         \end{figure}

\subsection{Robustness Checks} \label{sec:RobustMu}

We show that the qualitative findings in this section are robust to certain natural changes in the eligible set and the feature set. Together with the robustness check in Section \ref{sec:RestrictivenessGames}, these results  also speak to the sensitivity of the restrictiveness measure in general: although the measure will typically vary with these specifications, it may not be very sensitive  in practice for many economic models of interest.

\paragraph{Different distribution over the eligible set.} The uniform distribution is the same as  $\mbox{beta}(1,1)$, so to test the sensitivity of the restrictiveness measure we consider nearby $\mbox{beta}(a,b)$ distributions with parameters $(a,b)$ sampled from a uniform distribution over $[0.9,1.1]\times [0.9,1.1]$. For each $(a,b)$ pair, we generate certainty equivalents from a  $\mbox{beta}(a,b)$ distribution over the prize range, again keeping only those functions $f$ that satisfy FOSD. Over 100 such distributions $\mbox{beta}(a,b)$, the average restrictiveness is 0.29, with a minimum value of 0.27 and a maximum value of 0.32.

\paragraph{Different eligible set.} Next, we compute the restrictiveness of  CPT$(\alpha,\delta,\gamma$) with respect to an eligible set that imposes the range restriction in (i) but drops the FOSD restrictions in (ii).  The model's errors are substantially higher when we drop FOSD (increasing from 63.75 to 102.41), but so are the errors of the  Expected Value benchmark. The relative performance  of CPT$(\alpha,\delta,\gamma)$ compared to the expected-value baseline is nearly identical regardless of whether or not we impose FOSD:  the model's restrictiveness relative to this larger eligible set is 0.29 (compared to  0.28 relative to the original eligible set). 

\paragraph{Other sets of binary lotteries.} In our main analysis, the feature space $\mathcal{X}$  consisted of 25 binary lotteries from \citet{Bruhin} data. Below we report  the restrictiveness of CPT$(\alpha,\gamma,\delta)$ and DA$(\alpha,\eta)$ with respect to alternative sets of binary lotteries, drawn from five additional papers (see Appendix \ref{sec:TableROther} for details). Figure \ref{fig:CDF} shows the CDF of restrictiveness values across these lotteries (including  the  \citet{Bruhin} lotteries) for both models. We find that CPT is not very restrictive on any of these sets of lotteries, and that the distribution of restrictiveness for DA first-order stochastically dominates that of CPT.

\begin{figure}[H]
    \centering
    \includegraphics[scale=0.15]{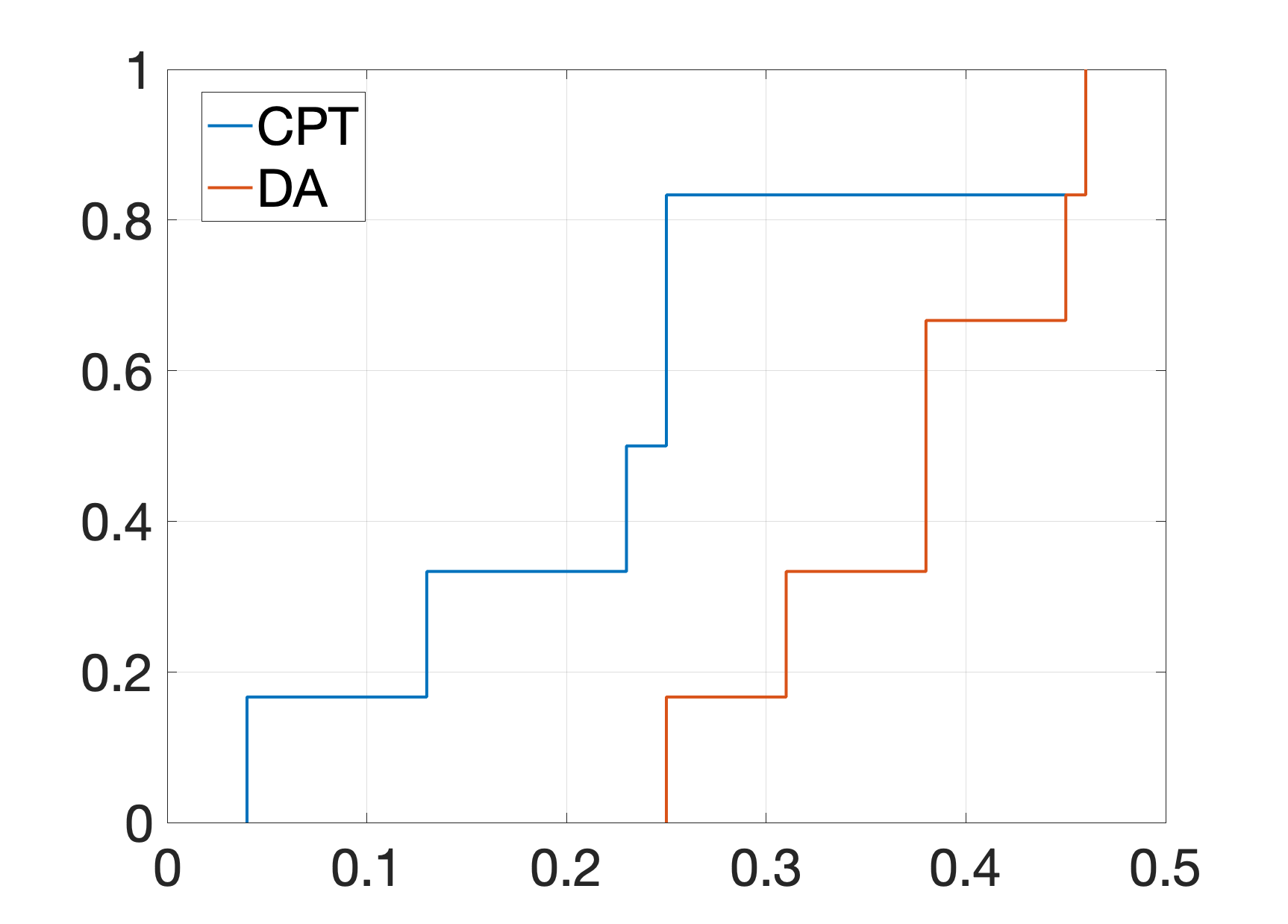}
    \caption{CDF of restrictiveness values}
    \label{fig:CDF}
\end{figure}

\paragraph{Lotteries over the loss domain.} On 25 binary lotteries over the loss domain from \citet{Bruhin}, the 3-parameter specification of CPT indexed to $(\beta,\gamma, \delta)$ predicts the certainty equivalent
$v^{-1}\left((1-w(1-p))\cdot v(\overline{z}) + w(1-p) \cdot v(\underline{z})\right)$ 
for each lottery $(\overline{z},\underline{z},p)$, where
$v(z)= -((-z)^{\beta})$ and
$
w(p)= (\delta p^\gamma)/(\delta p^\gamma + (1-p)^\gamma)
$. The restrictiveness of CPT on these lotteries is 0.31, with a standard error of 0.02.

\paragraph{Lotteries with larger supports.} Finally, we evaluate the restrictiveness of CPT($\alpha,\delta,\gamma$) on gains-domain lotteries with more than two possible outcomes. For each lottery $(z_1,...,z_n; p_1,...p_n)$, where $0\leq z_1 <... < z_n $, the predicted certainty equivalent is
$$v^{-1} \left(  \sum_{i} u(x_i) \left[ w \left( \sum_{k=1}^i p_k \right) - w \left( \sum_{k=1}^{i-1} p_k \right) \right]\right),$$ where for $i=1$ we define $ \sum_{k=1}^{0} p_{k}=0$, and $v$ and $w$ have the same functional forms as used above.  
  On 18 three-outcome gain-domain lotteries from \citet{BernheimSprenger}, the restrictiveness of CPT is 0.57, with a standard error of 0.02.  Thus  CPT is about twice as restrictive for certainty equivalents on  three-outcome lotteries as it is  on binary lotteries.   On a set of 10 six-outcome lotteries from \citet{fudenbergpuri},  the restrictiveness of CPT is $0.83$, with a standard error of 0.01.  These results suggest that CPT is more restrictive on lotteries with larger supports.

\section{\label{sec:App_GameIP}Application 2: The Distribution of Initial Play}\label{games}

\subsection{Setting}
Our second application is to predicting the distribution of initial play in games. Here the feature space $\mathcal{X}$ consists of the 466 unique $3 \times 3$ payoff matrices  from \citet{FudenbergLiang}.\footnote{These data are an aggregate of three data sets: the first is a meta data set of play in 86 games, collected from six experimental game theory papers in \citet{LeytonBrownWright}; the second is a data set of play in 200 games with randomly generated payoffs, which were gathered on MTurk for \citet{FudenbergLiang}; the third is a data set of play in 200 games that were ``algorithmically designed" for a  certain model (level 1 with risk aversion) to perform poorly, again from \citet{FudenbergLiang}.} The outcome space is  the set $\mathcal{Y}=\Delta(\{a_1,a_2,a_3\})$ of distributions of row player actions chosen by the participants in the experiments. The analyst seeks to predict this  distribution for each game.

  For any two prediction rules $f$ and $f'$, we define $d(f,f')$ to be the average Kullback-Liebler divergence between the predicted distributions:
  $d(f,f') = \frac{1}{466} \sum_{x \in \mathcal{X}} D(f(x) \| f'(x))$, 
  where $D$ denotes the Kullback-Leibler divergence. 

We  consider three economic models: The \emph{Poisson Cognitive Hierarchy Model}  (PCHM) of \citet{CamererHoChong04}, the Level-1 model with logistic best replies (henceforth \emph{Logit Level-1}), and the PCHM with logistic best replies (henceforth \emph{Logit PCHM}). The PCHM supposes that there is a distribution over players of differing levels of sophistication: The \emph{level-0} player randomizes uniformly over his available actions,  the \emph{level-1} player best responds to level-0 play \citep{StahlWilson94,StahlWilson95,Nagel}; and for $k\geq 2$, level-$k$ players best respond to a perceived distribution 
\begin{equation}
p_{k}(h,\tau)=\frac{\pi_{\tau}(h)}{\sum_{l=0}^{k-1}\pi_{\tau}(l)}\qquad\forall\,\,h\in\mathbb{N}_{<k} \label{perceiveddistr}
\end{equation}
over (lower) opponent levels, where $\pi_{\tau}$ is the Poisson distribution
with rate parameter $\tau \geq 0$.
The parameter $\tau$ is the single parameter of the model.

The \emph{Logit Level-1} prediction is defined as follows. For each row player action $a_i$, let $\overline{u}(a_{i})$ be the expected payoff of $a_i$ when the column player uses a uniform distribution. The predicted frequency with which $a_i$ is played is
$\exp\left(\lambda \cdot \overline{u}(a_i)\right)/\sum_{i=1}^3 \exp\left(\lambda \cdot \overline{u}(a_i)\right)$, where the logit parameter $\lambda \in \mathbb{R}_+$ is the single parameter of the model.

The \emph{Logit PCHM} (see e.g. \citet{LeytonBrownWright}) replaces the assumption of exact maximization in the PCHM with a logit best response. That is, the level-0 player chooses
$f_0=(1/3,1/3,1/3)$ as in the PCHM, but we recursively construct the distribution of play for higher levels as follows. For each $k\geq 1$, define 
\[v_k(a_i) = \sum_{h=0}^{k-1}     p_k(h,\tau) \left(\sum_{j=1}^3  f_{h}(a_{j}) u(a_i,a_j)\right)\]
to be the expected payoff of action $a_i$ against a player whose type is distributed according to $p_k(\cdot, \tau)$, where $p_k(h,\tau)$ is as given in (\ref{perceiveddistr}). The distribution of play for a level-$k$ player is then 
$f_k(a_i)= \exp(\lambda \cdot v_k(a_i))/\sum_{j=1}^3 \exp(\lambda \cdot v_k(a_j))$, where $\lambda \in \mathbb{R}_+$ is a logit parameter. We  aggregate across levels using a Poisson distribution with rate parameter $\tau \in \mathbb{R}_+$ to yield the predicted distribution of play.

Finally, we define the baseline prediction rule $f_{\text{base}}$ to predict uniform play in every game $x$. This prediction rule is nested in all three models.\footnote{Let $\tau=0$ in the PCHM or Logit PCHM, and let $\lambda=0$ in Logit Level-1.}

\subsection{Completeness}
We evaluate completeness using negative log-loss as the loss function, i.e., if the chosen action is $a_i$ when the model predicts distribution $(p_1, p_2,p_3)$, the loss in that observation is $-\log(p_i)$.\footnote{This loss function is paired to the Kullback-Leibler discrepancy function we used for measuring restrictiveness, see Appendix \ref{sec:Extend} for details.} The models PCHM, Logit Level-1, and Logit PCHM  are 43.6\%, 72.7\%, and 72.9\% complete.  Thus, as observed in a related study by \citet{LeytonBrownWright}, Logit PCHM provides much better predictions of the distribution of play than the baseline PCHM does. Perhaps surprisingly, almost all of Logit PCHM's improved performance can be obtained by simply adding the logit parameter to the Level-1 model; the further improvement from allowing for multiple levels of sophistication is negligible.\footnote{\citet{FudenbergLiang} found  that the Level-1 model provides a good prediction of the  modal action, but  this does not imply that Logit Level-1 will perform  well in predicting the full distribution of play.  The fact that it does further suggests that initial play in many of these experiments is rather unstrategic.}

\subsection{Restrictiveness} \label{sec:RestrictivenessGames}

 We turn now to evaluating the restrictiveness of these models. We have relatively little understanding about their  empirical content, but we do know that they all imply that if an action is strictly dominated, then the frequency with which it is chosen does not exceed 1/3, and that if an action is strictly dominant, then the frequency with which it is chosen is at least $1/3$. We define the eligible set to be all prediction rules that satisfy these conditions.\footnote{In our data, the median frequency of a strictly dominated action is 0.03, and the highest frequency is 0.35; the median frequency for a strictly dominant action is 0.86, and the lowest frequency is 0.69. Payoff maximization implies that dominant strategies should have probability 1 and dominated strategies have probability 0, but this is inconsistent with observed play in most game theory experiments.} 

   All three models are very  restrictive relative to this eligible set: Logit Level-1's restrictiveness is $0.970$, PCHM's restrictiveness is  $0.992$, and Logit PCHM's restrictiveness is 0.971.  Since the models' completeness ranges from 0.436 to 0.729,  they are much better predictors of the real data than of the synthetic data. Table \ref{tab:Games} reports completeness and restrictiveness measures for the models. We find that Logit Level-1 and Logit PCHM are substantially more complete than  PCHM and only slightly less restrictive, but none of the models is dominated by another. Moreover, Logit Level-1  and Logit PCHM are almost identical in terms of completeness and restrictiveness, even though the parametric forms of the two models are not evidently related.\footnote{No value of $\tau$ in the PCHM yields the Level-1 model, so Logit Level-1 is not nested within Logit PCHM.}

\begin{table}
\caption{\label{tab:NetApp}Restrictiveness and Completeness for Initial Play}

\centering{}%
\begin{tabular}{@{}l@{}lcc@{}}
\hline 
 & \# Param & \multicolumn{1}{c}{Restrictiveness} & \multicolumn{1}{c}{Completeness}\tabularnewline
\hline 
\hline 
 PCHM & $\quad$$\quad$ 1 &  0.992 & 0.436 
 \tabularnewline
 & &  ($<$0.001) & (0.017)     \tabularnewline
logit level-1 & $\quad$$\quad$ 1 & 0.970 &   0.727 \tabularnewline
&  & ($<$0.001) & (0.015)  \tabularnewline
logit PCHM & $\quad$$\quad$ 2 & 0.971  & 0.729  \tabularnewline
& & (0.003) & (0.014) \tabularnewline
\hline 
\multicolumn{3}{c}{\footnotesize{Restrictiveness estimated from 1000 simulations.}}
\end{tabular}
\end{table}

Finally, as a robustness check, we consider strengthening the background constraints imposed on the eligible set $\mathcal{F}$.          For each $t\in [0,0.3)$, we define the eligible set $\mathcal{F}(t)$ to include all prediction rules $f$ that satisfy the following conditions: (1) If an action is strictly dominated, then the frequency with which it is chosen does not exceed $1/3-t$; (2) If an action is strictly dominant, then the frequency with which it is chosen is at least $1/3+t$. The constraint imposed by these conditions increases in $t$, and $t=0$ returns our original specification of $\mathcal{F}$. We find that across choices of $t\in [0,0.3)$, the restrictivenesses of PCHM, Logit PCHM, and Logit Level-1 do not fall below 0.89 (see Table \ref{tab:varyFM} below). This tells us that  constraints on the frequency  of strictly dominated and strictly dominant strategies are a very small part of the empirical content of these models.

     \begin{table}[H] \label{tab:Games}
     \centering
           
\begin{tabular}{@{}l@{}ccc@{}}
\hline 
          &  PCHM & Logit Level-1 & Logit PCHM \\
          \hline
          \hline
          max \quad & 0.993 & 0.969 & 0.972\\
          min \quad & 0.974 & 0.890 & 0.957 \\
           \hline
          \end{tabular}
          \caption{\footnotesize{Highest and lowest smallest restrictiveness for $t\in [0,0.3)$.}}
         \label{tab:varyFM}         
          \end{table}

\section{Application 3: Diffusion in Social Networks}

\subsection{Setting}

Our final application is to the prediction of microfinance takeup
rates following diffusion of information in social networks. We use data from a study by \citet{banerjee2013diffusion}, in which certain ``leaders'' in 43 villages  in Karnatka, India were given information about a microfinance program, 
and takeup of the program was then tracked.\footnote{In 2007, the microfinance institution Bharatha Swamukti
Samsthe invited leaders 
within each village to an information
meeting, and asked the leaders to spread the information. The data
set contains the resulting  microfinance takeup rate for each
village and some measures of social connections between households.}

For each village $i$, let $y_{i}$ be the average takeup rate among
non-leader households.\footnote{This is the outcome variable that \citet{banerjee2013diffusion} focus
on.} Our goal is to predict $y_{i}$ given the observed characteristics
$X_{i}$ of village $i$. Specifically, a village configuration $X_{i}:=\left(N_{i},A_{i},L_{i}\right)$
consists of a set $N_{i}$ of villagers, an $n_{i}\times n_{i}$ adjacency
matrix $A_{i}$ that represents the measured social network, and the set $L_{i}$ of leaders in village $i$. The feature space $\mathcal{X}$ is the collection of 43 village configurations, and prediction rules are maps $f:{\cal X}\to\left[0,1\right]$ that from village configurations to the takeup
rate among non-leaders. There are  no obvious a priori restrictions on the takeup rates, so we set $\mathcal{F}$ to be the set $\left[0,1\right]^{43}$ of all possible prediction rules from
${\cal X}$ to $\left[0,1\right]$. We set the discrepancy function as $d(f,g):=\frac{1}{43}\sum_{i=1}^{43} (f(x_i)-g(x_i))^2$ and the loss function as $l(f(x),y) := (f(x)-y)^2.$ 

\subsection{Models}

The first parametric models we consider are OLS regressions with various subsets of the following eight network
statistics as regressors: (1) average eigenvector centrality of leaders; (2) average degree
centrality of leaders; (3) average degree centrality of all villagers;
(4) average betweenness centrality of leaders; (5) clustering coefficient
of village network; (6) average path length in village network; (7)
proportion of connected (non-isolated) villagers; (8) proportion of
leaders.

We compute the restrictiveness and completeness of a sequence of OLS
models by incrementally adding the regressors listed above. We set
the baseline as OLS regression on a constant, which is a special
case of all the linear models we consider. With the loss function  $l(f(x),y):=(y-f(x))^2$, an estimator of completeness (computed based on in-sample errors without the use of cross validations)
reduces to the R squared of the OLS regression.\footnote{Recall that the R-squared of an OLS regression is defined by $R^2:= 1 - SSR/SST$, where $SSR := \sum_i (y_i - x_i'\hat{\b})^2$ is the expected loss under an OLS regression model and $SST := \sum_i (y_i - \ol{y})^2$ is the expected loss under a constant model.} 

We also consider a partially linear
model built upon the ``network gossip centrality'' described
in \citet{banerjee2019using}. To do this, we model each non-leader
household's takeup probability as a function of its position in the village. We define the ``hearing matrix'' of village $i$ by
$H_{i}\left(\t_{1}\right):=\sum_{t=1}^{T}\t_{1}^{t}A_{i}^{t}$,
where $T$ is some given number of time periods for information diffusion.\footnote{$\left(\sum_{t=1}^{T}A_{i}^{t}\right)_{jk}$ counts
the number of paths from $j$ to $k$ of length up to $T$. We set $T = 5$ following \cite{banerjee2019using}.}
With $\t_{1}=1$, the $jk$-th entry of $H_{i}\left(1\right)$ can
be interpreted as the expected number of times villager $k$ hears
a piece of information that originates from villager $j$ within $T$
periods of time. The parameter $\t_{0}\in\left(0,1\right)$ discounts longer paths
of diffusion. For each non-leader $k$ in village $i$, we define
$x_{i,k}\left(\t_{1}\right):=\sum_{j\in L_{i}}\left(H_{i}\left(\t_{1}\right)\right)_{jk}$
as the ``network gossip centrality'' of non-leader $k$, which counts
the (discounted) sum of number of paths from the leaders of village
$i$ to non-leader $k$. Next, we model the takeup probability of
non-leader $k$ as function of $k$'s ``network gossip centrality''
based on a logistic model $p_{i,j}\left(\t_{0},\t_{1}\right):=\frac{\exp\left(\t_{0}+x_{i,j}\left(\t_{1}\right)\right)}{1+\exp\left(\t_{0}+x_{i,j}\left(\t_{1}\right)\right)}$, 
where $\t_{0}$ is a location parameter.\footnote{Note that we do not include a scale parameter here, since if present, it
will be absorbed into $\t_{1}$.} The expected
village-level takeup rate among
non-leaders can then be derived as the average $p_{i,j}(\t_0,\t_1)$ among non-leaders. To allow additional flexibility, and to nest the naive constant model as a special case, we introduce two additional linear parameters $(\t_2,\t_3)$, and set: 
$f_{i}\left(\t\right):=\t_{2}+\t_{3}\cd\frac{1}{\left|N_{i}\backslash L_{i}\right|}\sum_{j\notin L_{i}}p_{ij}\left(\t_{0},\t_{1}\right)$.
This model is very stylized; our purpose is to illustrate how our algorithmic approach can be used to evaluate the restrictiveness of a structural model whose flexibility is otherwise difficult to gauge. 
\subsection{Results}

\begin{table}[h]
\caption{\label{tab:NetApp}Restrictiveness and Completeness for Microfinance Takeup Rates}

\centering{}%
\begin{tabular}{@{}l@{}lcc@{}}
\hline 
 & \# Param & \multicolumn{1}{c}{Restrictiveness} & \multicolumn{1}{c}{Completeness}\tabularnewline
\hline 
\hline 
$\quad\,$Linear Models &  &  & \tabularnewline
\hline 
$\quad\,$Eigenvector Centrality of Leaders & $\quad$$\quad$1 & 0.9762 & 0.2577\tabularnewline
& & (0.0003) & (0.1101)\tabularnewline
+ Degree Centrality of Leaders & $\quad$$\quad$2 & 0.9526 & 0.3385\tabularnewline
& & (0.0004) & (0.1193)\tabularnewline
+ Degree Centrality of All Villagers & $\quad$$\quad$3 & 0.9288 & 0.3471\tabularnewline
& & (0.0005) & (0.1151)\tabularnewline
+ Betweenness Centrality of Leaders & $\quad$$\quad$4 & 0.9053 & 0.3475\tabularnewline
& & (0.0006) & (0.1158)\tabularnewline
+ Clustering Coefficient & $\quad$$\quad$5 & 0.8816 & 0.3516\tabularnewline
& & (0.0007) & (0.1191)\tabularnewline
+ Average Path Length & $\quad$$\quad$6 & 0.8579& 0.3516\tabularnewline
& & (0.0007) & (0.1191)\tabularnewline
+ Proportion of Connected Villagers & $\quad$$\quad$7 & 0.8342 & 0.3575\tabularnewline
& & (0.0008) & (0.1229)\tabularnewline
+ Proportion of Leaders & $\quad$$\quad$8 & 0.8101 & 0.3604\tabularnewline
& & (0.0008) & (0.1237)\tabularnewline
\hline
\hline 
$\quad\,$Partially Linear Model & $\quad$$\quad$4 & 0.9408 & 0.0674\tabularnewline
& & (0.0036) & (0.0452)\tabularnewline
\hline 
\end{tabular}
\end{table}

Table \ref{tab:NetApp} reports the restrictiveness and completeness
of the models described above.\footnote{Table \ref{tab:NetApp} displays the restrictiveness of the linear models based on  M = $10000$ simulations, while restrictiveness for the partially linear models is computed using $M = 100$ simulations. Completeness for all models is computed based the real data with $N = 43$ villages.} The panel ``Linear Models"
contains results about
a sequence of linear models, with a new regressor added to the OLS
regression in each row.\footnote{We add the regressors sequentially according to the ordering above, and omit many other different orderings of the same set of regressors, since the regressions in Table \ref{tab:NetApp} suffice to illustrate our main point.} For example, the row ``+
Degree Centrality'' corresponds to an OLS regression of takeup rates
on a constant, the leaders' average eigenvector, and the leaders' average degree centrality.

The numerical results for linear models are as expected: as more regressors are added the model becomes more
flexible, so restrictiveness decreases while completeness increases.
While restrictiveness seems to be decreasing
at an approximately linear rate starting from the second regression,
the corresponding increases in completeness appear less uniform, and in particular, completeness barely changes when we add the regressor
``average path length in the village.'' Note that this does not
mean that this additional regressor approximately lies in the linear span
of all previously included regressors, since we do observe a nontrivial
reduction in restrictiveness from the addition of this regressor: New regressors eventually barely improve fit to the data, but they continue to decrease restrictiveness. 

A priori it is unclear how restrictive the  partially linear model is.   It turns out that  its restrictiveness is  very high, 0.94, suggesting that the
individual-level modeling of takeup rates as a function
of network gossip centrality imposes substantial restrictions
across village configurations. However, this model's completeness is only 0.07, so  it does not capture much of the variation in village takeup rates.

This four-parameter partially linear model is dominated by the simple
linear model with a constant and the average eigenvector centrality of
leaders as the single regressor: the latter has both  higher restrictiveness
(0.9762 \textgreater{} 0.9408) and higher completeness
(0.2577 \textgreater{} 0.0674).  
This  shows that even a  detailed, structured, and economically-motivated model may turn out to be more flexible than a simple linear model, and that the added flexibility need not help it fit real data.

\section{Conclusion}

When a theory fits the data well, it matters whether  this is  because the theory captures important regularities in the data, or whether the theory is so flexible that it can explain any behavior at all.  We provide a practical, algorithmic approach for evaluating the restrictiveness of a theory, and demonstrate that it reveals new insights into models from two economic domains. The method is easily applied to  models across diverse domains.

As highly flexible machine learning methods become more popular in economics, economic theory is distinguished in part by the structure it imposes on behaviors. We view these restrictions as an important part of the  value added by economic theory, so it is natural to ask how restrictive economic models are compared to the highly flexible approaches used in machine learning. Our restrictiveness  measure offers a way to quantify this.

\bibliographystyle{ecta}
\bibliography{library}

\begin{thebibliography}{51}
\newcommand{\enquote}[1]{``#1''}
\expandafter\ifx\csname natexlab\endcsname\relax\def\natexlab#1{#1}\fi

\bibitem[\protect\citeauthoryear{Abdellaoui, Klibanoff, and Placido}{Abdellaoui
  et~al.}{2015}]{abdellaoui2015experiments}
\textsc{Abdellaoui, M., P.~Klibanoff, and L.~Placido} (2015):
  \enquote{Experiments on compound risk in relation to simple risk and to
  ambiguity,} \emph{Management Science}, 61, 1306--1322.

\bibitem[\protect\citeauthoryear{Andreoni, Gillen, and Harbaugh}{Andreoni
  et~al.}{2013}]{andreoni2013power}
\textsc{Andreoni, J., B.~J. Gillen, and W.~T. Harbaugh} (2013): \enquote{The
  power of revealed preference tests: Ex-post evaluation of experimental
  design,} \emph{Unpublished manuscript}.

\bibitem[\protect\citeauthoryear{Andrews, Fudenberg, Lei, Liang, and
  Wu}{Andrews et~al.}{2022}]{AFLW}
\textsc{Andrews, I., D.~Fudenberg, L.~Lei, A.~Liang, and C.~Wu} (2022):
  \enquote{The Transfer Performance of Economic Models,} Working Paper.

\bibitem[\protect\citeauthoryear{Austern and Zhou}{Austern and
  Zhou}{2020}]{austern2020asymptotics}
\textsc{Austern, M. and W.~Zhou} (2020): \enquote{Asymptotics of
  Cross-Validation,} \emph{arXiv preprint arXiv:2001.11111}.

\bibitem[\protect\citeauthoryear{Ba, Bohren, and Imas}{Ba
  et~al.}{2023}]{ba2023over}
\textsc{Ba, C., J.~A. Bohren, and A.~Imas} (2023): \enquote{Over-and
  Underreaction to Information,} Working Paper.

\bibitem[\protect\citeauthoryear{Banerjee, Chandrasekhar, Duflo, and
  Jackson}{Banerjee et~al.}{2013}]{banerjee2013diffusion}
\textsc{Banerjee, A., A.~G. Chandrasekhar, E.~Duflo, and M.~O. Jackson} (2013):
  \enquote{The diffusion of microfinance,} \emph{Science}, 341.

\bibitem[\protect\citeauthoryear{Banerjee, Chandrasekhar, Duflo, and
  Jackson}{Banerjee et~al.}{2019}]{banerjee2019using}
---\hspace{-.1pt}---\hspace{-.1pt}--- (2019): \enquote{Using gossips to spread
  information: Theory and evidence from two randomized controlled trials,}
  \emph{The Review of Economic Studies}, 86, 2453--2490.

\bibitem[\protect\citeauthoryear{Barberis and Huang}{Barberis and
  Huang}{2008}]{barberis2008stocks}
\textsc{Barberis, N. and M.~Huang} (2008): \enquote{Stocks as lotteries: The
  implications of probability weighting for security prices,} \emph{American
  Economic Review}, 98, 2066--2100.

\bibitem[\protect\citeauthoryear{Barseghyan, Molinari, O’Donoghue, and
  Teitelbaum}{Barseghyan et~al.}{2013}]{Donoghue}
\textsc{Barseghyan, L., F.~Molinari, T.~O’Donoghue, and J.~C. Teitelbaum}
  (2013): \enquote{The Nature of Risk Preferences: Evidence from Insurance
  Choices,} \emph{American Economic Review}, 103, 2499--2529.

\bibitem[\protect\citeauthoryear{Beatty and Crawford}{Beatty and
  Crawford}{2011}]{BeattyCrawford}
\textsc{Beatty, T. and I.~Crawford} (2011): \enquote{How Demanding Is the
  Revealed Preference Approach to Demand?} \emph{American Economic Review},
  101, 2782--95.

\bibitem[\protect\citeauthoryear{Becker}{Becker}{1962}]{Becker}
\textsc{Becker, G.~S.} (1962): \enquote{Irrational behavior and economic
  theory,} \emph{Journal of political economy}, 70, 1--13.

\bibitem[\protect\citeauthoryear{Bernheim and Sprenger}{Bernheim and
  Sprenger}{2020{\natexlab{a}}}]{bernheim2020empirical}
\textsc{Bernheim, B.~D. and C.~Sprenger} (2020{\natexlab{a}}): \enquote{On the
  empirical validity of cumulative prospect theory: Experimental evidence of
  rank-independent probability weighting,} \emph{Econometrica}, 88, 1363--1409.

\bibitem[\protect\citeauthoryear{Bernheim and Sprenger}{Bernheim and
  Sprenger}{2020{\natexlab{b}}}]{BernheimSprenger}
\textsc{Bernheim, D. and C.~Sprenger} (2020{\natexlab{b}}): \enquote{Direct
  Tests of Cumulative Prospect Theory,} Working Paper.

\bibitem[\protect\citeauthoryear{Blow, Browning, and Crawford}{Blow
  et~al.}{2021}]{BlowBrowningCrawford}
\textsc{Blow, L., M.~Browning, and I.~Crawford} (2021):
  \enquote{{Non-parametric Analysis of Time-Inconsistent Preferences},}
  \emph{The Review of Economic Studies}, 88, 2687--2734.

\bibitem[\protect\citeauthoryear{Bronars}{Bronars}{1987}]{Bronars}
\textsc{Bronars, S.} (1987): \enquote{The Power of Nonparametric Tests of
  Preference Maximization,} \emph{Econometrica}, 55, 693--698.

\bibitem[\protect\citeauthoryear{Bruhin, Fehr-Duda, and Epper}{Bruhin
  et~al.}{2010}]{Bruhin}
\textsc{Bruhin, A., H.~Fehr-Duda, and T.~Epper} (2010): \enquote{Risk and
  Rationality: Uncovering Heterogeneity in Probability Distortion,}
  \emph{Econometrica}, 78, 1375--1412.

\bibitem[\protect\citeauthoryear{Camerer, Ho, and Chong}{Camerer
  et~al.}{2004}]{CamererHoChong04}
\textsc{Camerer, C.~F., T.-H. Ho, and J.-K. Chong} (2004): \enquote{A cognitive
  hierarchy model of games,} \emph{The Quarterly Journal of Economics}, 119,
  861--898.

\bibitem[\protect\citeauthoryear{Chen and Santos}{Chen and
  Santos}{2018}]{chen2018overidentification}
\textsc{Chen, X. and A.~Santos} (2018): \enquote{Overidentification in regular
  models,} \emph{Econometrica}, 86, 1771--1817.

\bibitem[\protect\citeauthoryear{Choi, Fisman, Gale, and Kariv}{Choi
  et~al.}{2007}]{Choietal}
\textsc{Choi, S., R.~Fisman, D.~Gale, and S.~Kariv} (2007):
  \enquote{Consistency and Heterogeneity of Individual Behavior under
  Uncertainty,} \emph{American Economic Review}, 97, 1--15.

\bibitem[\protect\citeauthoryear{de~Clippel and Rozen}{de~Clippel and
  Rozen}{2022}]{ClippelRozen}
\textsc{de~Clippel, G. and K.~Rozen} (2022): \enquote{Which Performs Best?
  Comparing Discrete Choice Models,} Working Paper.

\bibitem[\protect\citeauthoryear{Ellis, Kariv, and Ozbay}{Ellis
  et~al.}{2022}]{EllisKarizOzbay}
\textsc{Ellis, K., S.~Kariv, and E.~Ozbay} (2022): \enquote{What Can the Demand
  Analyst Learn from Machine Learning?} Working Paper.

\bibitem[\protect\citeauthoryear{Fan, Budescu, and Diecidue}{Fan
  et~al.}{2019}]{fan2019decisions}
\textsc{Fan, Y., D.~V. Budescu, and E.~Diecidue} (2019): \enquote{Decisions
  with compound lotteries.} \emph{Decision}, 6, 109.

\bibitem[\protect\citeauthoryear{Fehr-Duda and Epper}{Fehr-Duda and
  Epper}{2012}]{fehrdudaepper}
\textsc{Fehr-Duda, H. and T.~Epper} (2012): \enquote{Probability and Risk:
  Foundations and Economic Implication of Probability-Dependent Risk
  Preferences,} \emph{Annual Review of Economics}, 4, 567--593.

\bibitem[\protect\citeauthoryear{Frankel and Kamenica}{Frankel and
  Kamenica}{2019}]{frankel2019quantifying}
\textsc{Frankel, A. and E.~Kamenica} (2019): \enquote{Quantifying information
  and uncertainty,} \emph{American Economic Review}, 109, 3650--80.

\bibitem[\protect\citeauthoryear{Fudenberg, Kleinberg, Liang, and
  Mullainathan}{Fudenberg et~al.}{2022}]{FKLM}
\textsc{Fudenberg, D., J.~Kleinberg, A.~Liang, and S.~Mullainathan} (2022):
  \enquote{Measuring the Completeness of Economic Models,} \emph{Journal of
  Political Economy}, 130, 956--990.

\bibitem[\protect\citeauthoryear{Fudenberg and Liang}{Fudenberg and
  Liang}{2019}]{FudenbergLiang}
\textsc{Fudenberg, D. and A.~Liang} (2019): \enquote{Predicting and
  Understanding Initial Play,} \emph{American Economic Review}, 109,
  4112--4141.

\bibitem[\protect\citeauthoryear{Fudenberg and Puri}{Fudenberg and
  Puri}{2021}]{fudenbergpuri}
\textsc{Fudenberg, D. and I.~Puri} (2021): \enquote{Evaluating and Extending
  Theories of Choice Under Risk,} Working Paper.

\bibitem[\protect\citeauthoryear{Goldstein and Einhorn}{Goldstein and
  Einhorn}{1987}]{goldstein1987expression}
\textsc{Goldstein, W.~M. and H.~J. Einhorn} (1987): \enquote{Expression theory
  and the preference reversal phenomena,} \emph{Psychological review}, 94,
  236--254.

\bibitem[\protect\citeauthoryear{Green and Hwang}{Green and
  Hwang}{2012}]{green2012initial}
\textsc{Green, T.~C. and B.-H. Hwang} (2012): \enquote{Initial public offerings
  as lotteries: Skewness preference and first-day returns,} \emph{Management
  Science}, 58, 432--444.

\bibitem[\protect\citeauthoryear{Gul}{Gul}{1991}]{gul1991}
\textsc{Gul, F.} (1991): \enquote{A Theory of Disappointment Aversion,}
  \emph{Econometrica}, 59, 667--686.

\bibitem[\protect\citeauthoryear{Hansen}{Hansen}{1982}]{hansen1982large}
\textsc{Hansen, L.~P.} (1982): \enquote{Large sample properties of generalized
  method of moments estimators,} \emph{Econometrica}, 50, 1029--1054.

\bibitem[\protect\citeauthoryear{Harless and Camerer}{Harless and
  Camerer}{1994}]{HarlessCamerer}
\textsc{Harless, D. and C.~Camerer} (1994): \enquote{The Predictive Utility of
  Generalized Expected Utility Theories,} \emph{Econometrica}, 62, 1251--1289.

\bibitem[\protect\citeauthoryear{Hausman}{Hausman}{1978}]{hausman1978specification}
\textsc{Hausman, J.~A.} (1978): \enquote{Specification tests in econometrics,}
  \emph{Econometrica}, 46, 1251--1271.

\bibitem[\protect\citeauthoryear{Hey}{Hey}{1998}]{Hey}
\textsc{Hey, J.~D.} (1998): \enquote{An application of Selten’s measure of
  predictive success,} \emph{Mathematical Social Sciences}, 35, 1--15.

\bibitem[\protect\citeauthoryear{Karmarkar}{Karmarkar}{1978}]{karmarkar1979}
\textsc{Karmarkar, U.} (1978): \enquote{Subjectively weighted utility: A
  descriptive extension of the expected utility model,} \emph{Organizational
  Behavior \& Human Performance}, 21, 67--72.

\bibitem[\protect\citeauthoryear{Koopmans and Reiersol}{Koopmans and
  Reiersol}{1950}]{Koopmans}
\textsc{Koopmans, T. and O.~Reiersol} (1950): \enquote{The Identification of
  Structural Characteristics,} \emph{The Annals of Mathematical Statistics},
  21, 165--181.

\bibitem[\protect\citeauthoryear{Lattimore, Baker, and Witte}{Lattimore
  et~al.}{1992}]{lattimore1992influence}
\textsc{Lattimore, P.~K., J.~R. Baker, and A.~D. Witte} (1992): \enquote{The
  influence of probability on risky choice: A parametric examination,}
  \emph{Journal of Economic Behavior \& Organization}, 17, 315--436.

\bibitem[\protect\citeauthoryear{Murad, Sefton, and Starmer}{Murad
  et~al.}{2016}]{murad2016risk}
\textsc{Murad, Z., M.~Sefton, and C.~Starmer} (2016): \enquote{How do risk
  attitudes affect measured confidence?} \emph{Journal of Risk and
  Uncertainty}, 52, 21--46.

\bibitem[\protect\citeauthoryear{Nagel}{Nagel}{1995}]{Nagel}
\textsc{Nagel, R.} (1995): \enquote{Unraveling in Guessing Games: An
  Experimental Study,} \emph{American Economic Review}, 85, 1313--1326.

\bibitem[\protect\citeauthoryear{Peysakhovich and Naecker}{Peysakhovich and
  Naecker}{2017}]{PeysakhovichNaecker}
\textsc{Peysakhovich, A. and J.~Naecker} (2017): \enquote{Using methods from
  machine learning to evaluate behavioral models of choice under risk and
  ambiguity,} \emph{Journal of Economic Behavior and Organization}, 133,
  373--384.

\bibitem[\protect\citeauthoryear{Polisson, Quah, and Renou}{Polisson
  et~al.}{2020}]{Polisson}
\textsc{Polisson, M., J.~K.-H. Quah, and L.~Renou} (2020): \enquote{Revealed
  Preferences over Risk and Uncertainty,} \emph{American Economic Review}, 110,
  1782--1820.

\bibitem[\protect\citeauthoryear{Routledge and Zin}{Routledge and
  Zin}{2010}]{routledge2010generalized}
\textsc{Routledge, B.~R. and S.~E. Zin} (2010): \enquote{Generalized
  disappointment aversion and asset prices,} \emph{The Journal of Finance}, 65,
  1303--1332.

\bibitem[\protect\citeauthoryear{Sargan}{Sargan}{1958}]{sargan1958estimation}
\textsc{Sargan, J.~D.} (1958): \enquote{The estimation of economic
  relationships using instrumental variables,} \emph{Econometrica}, 26,
  393--415.

\bibitem[\protect\citeauthoryear{Schwaninger}{Schwaninger}{2022}]{Schwaninger}
\textsc{Schwaninger, M.} (2022): \enquote{Sharing with the powerless third:
  Other-regarding preferences in dynamic bargaining,} \emph{Journal of Economic
  Behavior and Organization}, 197, 341--355.

\bibitem[\protect\citeauthoryear{Selten}{Selten}{1991}]{Selten}
\textsc{Selten, R.} (1991): \enquote{Properties for a Measure of Predictive
  Success,} \emph{Mathematical Social Sciences}, 21, 153--167.

\bibitem[\protect\citeauthoryear{Snowberg and Wolfers}{Snowberg and
  Wolfers}{2010}]{SnowbergWolfers}
\textsc{Snowberg, E. and J.~Wolfers} (2010): \enquote{Explaining the
  Favorite-Long Shot Bias: Is It Risk-Love or Misperceptions?} \emph{Journal of
  Political Economy}, 118, 723--746.

\bibitem[\protect\citeauthoryear{Stahl and Wilson}{Stahl and
  Wilson}{1994}]{StahlWilson94}
\textsc{Stahl, D.~O. and P.~W. Wilson} (1994): \enquote{Experimental evidence
  on players' models of other players,} \emph{Journal of Economic Behavior and
  Organization}, 25, 309--327.

\bibitem[\protect\citeauthoryear{Stahl and Wilson}{Stahl and
  Wilson}{1995}]{StahlWilson95}
---\hspace{-.1pt}---\hspace{-.1pt}--- (1995): \enquote{On players' models of
  other players: Theory and experimental evidence,} \emph{Games and Economic
  Behavior}, 10, 218--254.

\bibitem[\protect\citeauthoryear{Sutter, Kocher, Gl{\"a}tzle-R{\"u}tzler, and
  Trautmann}{Sutter et~al.}{2013}]{sutter2013impatience}
\textsc{Sutter, M., M.~G. Kocher, D.~Gl{\"a}tzle-R{\"u}tzler, and S.~T.
  Trautmann} (2013): \enquote{Impatience and uncertainty: Experimental
  decisions predict adolescents' field behavior,} \emph{American Economic
  Review}, 103, 510--31.

\bibitem[\protect\citeauthoryear{Tversky and Kahneman}{Tversky and
  Kahneman}{1992}]{CPT}
\textsc{Tversky, A. and D.~Kahneman} (1992): \enquote{Advances in Prospect
  Theory: Cumulative Representation of Uncertainty,} \emph{Journal of Risk and
  Uncertainty}, 5, 297--323.

\bibitem[\protect\citeauthoryear{Wright and Leyton-Brown}{Wright and
  Leyton-Brown}{2014}]{LeytonBrownWright}
\textsc{Wright, J.~R. and K.~Leyton-Brown} (2014): \enquote{Level-0 meta-models
  for predicting human behavior in games,} \emph{Proceedings of the fifteenth
  ACM conference on Economics and computation}, 857--874.

\end{thebibliography}

\appendix



\section{Proof of Proposition \ref{prop:axiom}} \label{Proof_Axioms}

Throughout this proof, we use  $\Sigma$ to denote the Lebesgue $\sigma$-algebra on $\overline{\mathcal{F}}$, and shorten $\Sigma$-measurable to simply ``measurable."

It is clear that A\ref{axiom:Nonnegative}-A\ref{axiom:Convexity} are satisfied by the representation in (\ref{eq:e}), and A\ref{axiom:Nonnegative}-A\ref{axiom:Symmetry} are satisfied by the approximation error measure given in (\ref{e:Uniform}).
For the other direction, we begin by demonstrating the following lemma:

\begin{lemma} \label{lemm:Expectation} Suppose $e$ satisfies  A\ref{axiom:Nonnegative} and A\ref{axiom:Convexity}. Then for every $\mathcal{F}_\Theta$ and $d$, there exists a function $h: \overline{\mathcal{F}} \rightarrow \mathbb{R}$ such that 
$e(\mathcal{F}_\Theta,\mathcal{F},d) = \mathbb{E}\left[h(f) \, : \, f \sim \mu_{\mathcal{F}}\right] $ for all   measurable sets $\mathcal{F}$.
\end{lemma}

\begin{proof} Fix an arbitrary $\mathcal{F}_\Theta$ and $d$, and define $e_*: \Sigma \rightarrow \mathbb{R}$  to satisfy
$e_*(\mathcal{F}) \equiv e(\mathcal{F}_\Theta,\mathcal{F},d)$ for all measurable $\mathcal{F}$. 
The lemma follows if we can show that A\ref{axiom:Convexity} implies the existence of a function $h: \overline{\mathcal{F}} \rightarrow \mathbb{R}$ such that
$e_*(\mathcal{F}) = \int_{\mathcal{F}} h(f) d\mu_{\mathcal{F}}$ for all measurable $\mathcal{F},$ where $\mu_{\mathcal{F}}$ denotes the  measure $\mu$ conditional on the event $\mathcal{F}$.

Define $\nu: \Sigma \rightarrow \mathbb{R}$ to satisfy
$\nu(\mathcal{F}) = \mu(\mathcal{F}) \cdot e_*(\mathcal{F})$ for all measurable $\mathcal{F}$. Then A\ref{axiom:Convexity} implies that for any sequence $\mathcal{F}_{\Theta_1}, \mathcal{F}_{\Theta_2}, \dots$,
$\sum_{i=1}^\infty \nu(\mathcal{F}_{\Theta_i}) 
 = \nu\left(\bigcup_{i=1}^\infty \mathcal{F}_{\Theta_i}\right)$.
Also, $\nu(\emptyset)=0$ (since $\mu(\emptyset)=0$) and $\nu$ is non-negative (by A\ref{axiom:Nonnegative}), so $\nu$ is a measure on $(\overline{\mathcal{F}},\Sigma)$. 
Moreover, $\nu$ is absolutely continuous with respect to  $\mu$ by construction. So the Radon-Nikdoym theorem implies existence of a function $h: \overline{\mathcal{F}}\rightarrow \mathbb{R}$ such that 
$\nu(\mathcal{F}) = \int_{\mathcal{F}} h(f) d\mu$  for all measurable $\mathcal{F}.$
Then $\mu(\mathcal{F}) e_*(\mathcal{F}) = \mu(\mathcal{F}) \int_{\mathcal{F}} h(f) \frac{d\mu}{\mu(\mathcal{F})}  = \mu(\mathcal{F}) \int_{\mathcal{F}} h(f) d\mu_{\mathcal{F}},$
so $e_*(\mathcal{F}) = \int_{\mathcal{F}} h(f) d\mu_{\mathcal{F}}$.
\end{proof}

Now fix any $\mathcal{F}_\Theta$ and $d$, and let $h$ be the function given in Lemma \ref{lemm:Expectation}. We will show that A\ref{axiom:Monotone} and A\ref{axiom:Rescale} imply that for each $f \in \overline{\mathcal{F}}$,
\begin{equation} \label{eq:h}
h(f) = c_f \cdot d(\mathcal{F}_\Theta,f)
\end{equation}
for some constant $c_f \in \mathbb{R}_+$.

Fix an arbitrary $f$. Lemma \ref{lemm:Expectation} implies 
$e(\mathcal{F}_\Theta,\{f\},d)  = \int h(f') \cdot d\delta_{f} = h(f),$ 
where $\delta_f$ denotes the Dirac measure at $f$. So it is sufficient for (\ref{eq:h}) to show that there is a constant $c_f \in \mathbb{R}_+$ such that 
$e(\mathcal{F}_\Theta,\{f\},d) = c_f \cdot d(\mathcal{F}_\Theta,f)$  for all  $\mathcal{F}_\Theta,d.$
By A\ref{axiom:Monotone},  models can be completely ordered for the eligible set $\{f\}$, where $e(\mathcal{F}_{\Theta_1},\{f\},d) \geq e(\mathcal{F}_{\Theta_2},\{f\},d)$ if and only if $d(\mathcal{F}_{\Theta_1},f) \geq d(\mathcal{F}_{\Theta_2},f)$. So there is a monotone increasing function $\Phi: \mathbb{R} \rightarrow \mathbb{R}$ such that
\begin{equation} \label{eq:increasing}
e(\mathcal{F}_\Theta,\{f\},d) = \Phi(d(\mathcal{F}_\Theta,f)).
\end{equation}

Now we will show that $\Phi$ must be linear. Choose an arbitrary $\alpha \in \mathbb{R}_+$. Define $d' = \alpha \cdot d$ and suppose some model $\mathcal{F}_{\Theta'}$ satisfies $d(\mathcal{F}_{\Theta'},f) = \alpha \cdot d(\mathcal{F}_\Theta,f)$. Then 
$e(\mathcal{F}_\Theta,\{f\},d')  = \alpha \cdot e(\mathcal{F}_\Theta,\{f\},d) = \alpha \cdot \Phi \left(d(\mathcal{F}_\Theta,f)\right),$ 
where the first equality follows  by (A\ref{axiom:Rescale}) and the second follows by (\ref{eq:increasing}). Also  
$e(\mathcal{F}_{\Theta'},\{f\},d) = \Phi ( d(\mathcal{F}_{\Theta'}, f))  = \Phi (\alpha \cdot d(\mathcal{F}_\Theta,f)),$ where the first equality follows by (\ref{eq:increasing}). A\ref{axiom:Rescale} requires $e(\mathcal{F}_{\Theta'},\{f\},d)  = e(\mathcal{F}_\Theta,\{f\}, d')$, so $\alpha \cdot \Phi (d(\mathcal{F}_\Theta,f)) = \Phi(\alpha \cdot d(\mathcal{F}_\Theta,f))$. Thus we can write
$e(\mathcal{F}_\Theta,\{f\},d) = c_{f} \cdot d(\mathcal{F}_\Theta,f)$
for some constant $c_{f} \in \mathbb{R}_+$. Repeating this argument for every $f$, there is a function $c: \mathcal{F} \rightarrow \mathbb{R}$ such that
$e(\mathcal{F},\mathcal{F},d) = \mathbb{E}_{f \sim \mu_{\mathcal{F}}}\left[ c(f) \cdot d(G,f)  \right]$ for all measurable $\mathcal{F}$,
so we have the representation in (\ref{eq:e}). 

Now suppose that A\ref{axiom:Symmetry} is satisfied in addition to the other axioms. The previous arguments imply that there is a function $c: \overline{\mathcal{F}} \rightarrow \mathbb{R}$ such that 
\[e(\mathcal{F}_\Theta,\mathcal{F},d) =   \mathbb{E}_{f \sim \mu_{\mathcal{F}}}\left[c(f) \cdot \inf_{f' \in \mathcal{F}_\Theta} d(f',f)  \right] \quad \forall \mathcal{F}_\Theta,\mathcal{F},d \]
Suppose towards contradiction that $e$ cannot be represented by (\ref{e:Uniform}). Then there must exist an eligible set $\mathcal{F}$ and  $f,f' \in \mathcal{F}$ such that $c(f) \cdot \mu_{\mathcal{F}}(f) > c(f') \cdot \mu_{\mathcal{F}}(f')$. But then for any models $\mathcal{F}_{\Theta_1}$ and $\mathcal{F}_{\Theta_2}$ with the property that $[d(\mathcal{F}_{\Theta_1},f) = d(\mathcal{F}_{\Theta_2},f') > d(\mathcal{F}_{\Theta_2},f) = d(\mathcal{F}_{\Theta_1},f'),$ it follows that $e(\mathcal{F}_{\Theta_1},\{f,f'\},d) > e(\mathcal{F}_{\Theta_2},\{f,f'\},d)$, violating A\ref{axiom:Symmetry}. 

\clearpage
\noindent \begin{center}
{\large{}Online Appendix to the Paper}
\par\end{center}{\large \par}
\noindent \begin{center}
{\LARGE{}How Flexible is that Functional Form? Measuring the Restrictiveness of Theories}
\par\end{center}{\LARGE \par}
\medskip{}
\begin{center}
Drew Fudenberg \quad Wayne Gao \quad Annie Liang
\par\end{center}{\large \par}
\noindent \begin{center}
\today
\par\end{center}


\section{A Guide for Practitioners} \label{guide}

Below we provide detailed instructions for how to take the proposed measures to other applications.

\subsection{Setup}

\paragraph{The Prediction Problem and Model.} We suppose that the researcher has a dataset that can be described as a set of observations $(x,y)$, where $x$ is interpreted as an observable input, and $y$ is interpreted as the outcome to be predicted. Define
\begin{itemize}
    \item the \textbf{set of features} $\mathcal{X}$ to consist of all unique instances of $x$ in the analyst's data (thus by construction finite).
    \item the \textbf{set of outcomes} $\mathcal{Y} \subseteq \mathbb{R}^k$ to be the set in which $y$ takes values. 
\end{itemize}
Let $\ol{\cF} = \mathcal{Y}^{\vert\mathcal{X}\vert}$ be the set of all mappings from $\mathcal{X}$ to $\mathcal{Y}$. 

The researcher is interested in studying the properties of some parametric model $\cF_\T=\{f_\theta\}_{\theta\in \Theta}$, where each  $f_\theta$ belongs to $\ol{\cF}$.

\paragraph{Baseline.}Choose a ``baseline mapping" $f_{\text{base}}$ from the model $\cF_\T$. The purpose of the baseline is to provide a lower bound for error that any sensible model should outperform. Some possibilities for how to choose this baseline include:
\begin{itemize}
    \item choosing a ``degenerate" version of the model with the parameters fixed at some default values (for example, Expected Value as  a degenerate case of Cumulative Prospect Theory, as in our Application 1)
    \item choosing a mapping that corresponds to ``guessing at random" (e.g., predicting a uniform distribution over the possible outcomes, as in our Application 2)
    \item choosing a best constant prediction based on the data (e.g., regressing a linear model on a constant, as in our Application 3)
\end{itemize}

The choice of baseline mapping should be reported along with estimates of restrictiveness and completeness, and a natural robustness check is to verify that these estimates do not change significantly over different (reasonable) choices of baseline.

\subsection{Evaluating Restrictiveness}

\paragraph{The Eligible Set.} The researcher first determines the \textbf{eligible set} $\cF$, which is a subset of mappings from $\mathcal{X}$ to $\mathcal{Y}$ that satisfy some given properties.  Which and how many properties to choose depends on what the researcher wants to understand. If the researcher wants to know whether the model imposes any restrictions at all, then  the eligible set should include all mappings from $\mathcal{X}$ to $\mathcal{Y}$. If the researcher wants to know how restrictive the model is beyond imposing some Property A, then the eligible set should include only mappings that are consistent with Property A.

\paragraph{The Discrepancy Function $d$.} Next the researcher chooses a discrepancy function $d: \ol{\cF} \times \ol{\cF} \rightarrow \mathbb{R}_+$ that tells us how different any two mappings $f$ and $f'$ are. Although we leave this specification open to the researcher, we recommend choice of a continuous $d$ to facilitate computation. Additionally, when the outcome space $\mathcal{Y}$ is real-valued, a natural choice is the expected squared distance between the predictions of $f$ and $f'$, namely \[d(f,f')=\mathbb{E}_{P_X}\left[(f(X)-f'(X))^2\right]\]
where $P_X$ is the empirical distribution on $\mathcal{X}$ in the researcher's dataset. And when the outcome space $\mathcal{Y}$ consists of probability distributions, a natural choice is the expected Kullback-Liebler divergence between  the predictions of $f$ and $f'$, namely \[d(f,f')=\mathbb{E}_{P_X}\left[D(f(X) \| f'(X))\right]\]
where $D$ denotes the Kullback-Liebler divergence. Nonstandard choices of $d$ should be explained and justified.

\paragraph{Computing Restrictiveness.} By assumption that $\mathcal{Y}$ is a subset of finite-dimensional Euclidean space, the uniform distribution on any choice of eligible set $\mathcal{F}$ is well-defined. To compute the restrictiveness $r(\cF_\T, \cF)$ for a parametric model $\cF_\T$, the researcher should:
\begin{enumerate}
    \item Choose a sample size $M \in \N$ (for example, set $M = 1000$).
    \item Sample $M$ mappings from the uniform distribution on the eligible set $\mathcal{F}$. Denote each generated mapping by $f_m$.
    \item Compute the estimate of restrictiveness as follows:
    \[
    \hat r = \frac{\frac{1}{M} \sum_{m=1}^{M} d(\cF_\T, f_m) }{\frac{1}{M} \sum_{m=1}^{M} d(f_{base}, f_m)}.
    \]
    where $d(\cF_\T,f) \equiv \inf_{g \in \cF_\T} d(g,f)$.
\end{enumerate}

When $d$ is continuous (as is recommended), then $d(\cF_\T,f) \equiv \inf_{g \in \cF_\T} d(g,f)$ can be replaced by $d(\cF_\T,f) \equiv \min_{g \in \cF_\T} d(g,f)$, which can be computed for example by discretizing $\Theta$ and searching over this grid.

\paragraph{Computing the Standard Error.} Let $\hat \sigma^2_{\cF_{\Theta}}$ be the sample variance of $\{d(\cF_{\Theta}, f_m)\}_{m=1}^M$, $\hat{\sigma}_{\{f_{base}\}}$ be the sample variance of $\{d(\{f_{base}\}, f_m)\}_{m=1}^M$, and $\hat \sigma_{\cF_{\Theta}, \{f_{base}\}}$ be the sample covariance of $\{d(\cF_{\Theta}, f_m)\}_{m=1}^M$ and $\{d(\{f_{base}\}, f_m)\}_{m=1}^M$. Define
\[
\hat \sigma^2_{\hat r} \equiv 
\frac{\hat \sigma^2_{\cF_\T} - 2 \cdot \hat r \cdot \hat \sigma_{\cF_\T, \{f_{base}\}} + \hat{r}_M^2 \cdot \hat \sigma^2_{\{f_{base}\}}}{\left(\frac{1}{M} \sum_{m=1}^M d(f_{base}, f_m) \right)^2}
\]
Then, $\sqrt M (\hat r - r(\cF_\T, \ol{\cF}))/\hat \sigma_{\hat r} \dto \mathcal{N}(0,1)$, so the standard error of $\hat{r}$ can be estimated by $\hat \sigma_{\hat r}/\sqrt{M}$.

\subsection{Evaluating Completeness}

\paragraph{The Loss Function $\ell$.} Choose a continuous loss function $\ell: \ol{\cF} \times \mathcal{X} \times \mathcal{Y} \rightarrow \mathbb{R}_+$ where $\ell(f,(x,y))$ measures how wrong the prediction $f(x)$ is when the true outcome is $y$. We leave this specification open to the researcher, but there are natural choices of loss functions to use depending on the prediction problem and the choice of discrepancy $d$. As we discuss in Appendix \ref{sec:Extend}, certain choices of discrepancy $d$ and loss $\ell$ are ``paired," and thus are natural to choose with one another. Specifically, when the outcome space $\mathcal{Y}$ is real-valued and the discrepancy $d$ is the expected squared distance, then consider choosing
\[\ell(f,x,y) = (y-f(x))^2\]
to be the squared distance between the prediction and the outcome. When the outcome space $\mathcal{Y}$ consists of a set of probability distributions and the discrepancy $d$ is the expected KL divergence, then consider choosing
\[\ell(f,(x,y)) = -\log f(y\mid x)\]
to be the negative conditional log-likelihood of observing $y$ given $x$. 

\paragraph{Computing Completeness.} Let the researcher's data be written as $\left\{ Z_{i}:=\left(X_{i},Y_{i}\right)\right\} _{i=1}^{N}$. 
We describe below a $K$-fold cross-validated estimator for completeness $\kappa(\cF_\T)$. 

For $\widetilde{\mathcal{F}} \in \{\ol{\cF}, \cF_\T, \{f_{\text{base}}\}\}$, compute the respective out-of-sample prediction errors  $\hat{e}_{CV}\left(\ol{\cF} \right)$, $\hat{e}_{CV}\left(\cF_\T\right)$ and $\hat{e}_{CV}\left(f_{\text{base}}\right)$  as follows: 
\begin{enumerate}
    \item Divide the data $(Z_1, \dots, Z_N)$ into $K$ (approximately)
equal-sized groups. To simplify notation, assume that $J_{N}=\frac{N}{K}$
is an integer.
\item Let $k\left(i\right)$ denote the group number of observation
$Z_{i}$. In each $k$-th iteration of cross-validation, the $k$-th test set consists of all observations belonging to group $k$, and the $k$-th training set consists of all remaining observations. 
\item For each group $k=1,...,K$, define 
\[
\hat{f}^{-k} :=\arg\min_{f\in \widetilde{{\cal F}}}\frac{1}{N-J_{N}}\sum_{k\left(i\right)\neq k}l(f,Z_i)\]
to be the element of  $\widetilde{\cal F}$  that minimizes error for prediction of the training data in iteration $k$.  This estimated mapping is used for prediction of the $k$-th test set, and 
\[
\hat{e}_{k}  :=\frac{1}{J_{N}}\sum_{k\left(i\right)=k}l\left(\hat{f}^{-k},Z_i\right)
\]
is its out-of-sample error. 
\item Then,
\[\hat{e}_{CV}\left(\widetilde{{\cal F}}\right)  :=\frac{1}{K}\sum_{k=1}^{K}\hat{e}_{k}  
\]
is the average out-of-sample error across the $K$ choices of test set.
\end{enumerate}

The following is an estimator for $\kappa(\cF_\T)$:
\[
\hat{\kappa}= 1 - \frac{\hat{e}_{CV}\left(\cF_\T\right)-\hat{e}_{CV}\left({\ol{\cF}}\right)}{\hat{e}_{CV}\left(f_{\text{base}}\right)-\hat{e}_{CV}\left({\ol{\cF}}\right)}.
\]

\paragraph{Computing the Standard Error.} For the $k$-th test set, let $f_{\hat{\t}^{-k}}$ and $\hat{f}^{-k}$
be the estimated mappings from models $\cF_\T$ and $\cF$,
respectively. The difference in their test errors on observation $Z_{i}$
is $$\Delta_{\t,k}\left(Z_{i}\right):=l\left(f_{\hat{\t}^{-k}},Z_{i}\right)-l\left(\hat{f}^{-k},Z_{i}\right),$$
and the average difference across all observations in test fold $k$
is 
\[
\overline{\Delta}_{\t,k}:=\frac{1}{J_{N}}\sum_{k(i)=k}\Delta_{\t,k}\left(Z_{i}\right).
\]
The sample variance of the difference in test errors for the $k$-th fold is 
\[
\hat{\s}_{\Delta_{\t},k}^{2}:=\frac{1}{J_{N}-1}\sum_{k(i)=k}\left(\Delta_{\t,k}\left(Z_{i}\right)-\overline{\Delta}_{\t,k}\right)^{2}
\]
which we then average over the $K$ folds and obtain
\[
\hat{\s}_{\Delta_{\t}}^{2}:=\frac{1}{K}\sum_{k=1}^{K}\hat{\s}_{\D_{\t},k}^{2}.
\]
Similarly we define $\Delta_{f_{\text{base}},k}\left(Z_{i}\right):=l\left(f_{\text{base}},Z_{i}\right)-l\left(\hat{f}^{-k},Z_{i}\right)$, and correspondingly $\overline{\Delta}_{f_{\text{base}},k}$, $\hat{\s}_{\Delta_{f_{\text{base}}},k}^{2}$ and 
$\hat{\s}_{\Delta_{f_{\text{base}}}}^{2}$. Lastly, define
the covariance estimator by 
\[
\hat{\s}_{\Delta_{\text{\ensuremath{\t}}}\D_{f_{\text{base}}}}:=\frac{1}{K}\sum_{k=1}^{K}\frac{1}{J_{N}-1}\sum_{k(i)=k}\left(\Delta_{\t,k}\left(Z_{i}\right)-\overline{\Delta}_{\t,k}\right)\left(\Delta_{f_{\text{base}},k}\left(Z_{i}\right)-\ol{\Delta}_{f_{\text{base}},k}\left(Z_{i}\right)\right).
\]
Based on $\hat{\s}_{\Delta_{\t}}^{2},\hat{\s}_{\Delta_{f_{\text{base}}}}^{2}$
and $\hat{\s}_{\Delta_{\text{\ensuremath{\t}}}\D_{f_{\text{base}}}}$,
we define the following variance estimator for $\hat{\kappa}$:
\begin{equation}
\hat{\s}_{\hat{\kappa}}^{2}:=\frac{\hat{\s}_{\Delta_{\t}}^{2}-2\hat{\kappa}\hat{\s}_{\Delta_{\text{\ensuremath{\t}}}\D_{f_{\text{base}}}}+\hat{\kappa}^{2}\hat{\s}_{\Delta_{f_{\text{base}}}}^{2}}{\left[\hat{e}_{CV}\left(f_{\text{base}}\right)-\hat{e}_{CV}\left({\ol{\cF}}\right)\right]^{2}}\label{eq:sekappa}
\end{equation}
so the standard error of $\hat{\kappa}$ can be estimated by $\hat{\s}_{\hat{\kappa}}/\sqrt{N}$.

\section{Proof of Proposition \ref{ANorm_kstar}} \label{sec:asym_CV}

\subsection{Preliminary Definitions}

We now introduce some definitions and notation that will be useful in the derivation of the asymptotic distribution of the CV-based completeness estimator. 

\subsubsection{Finite-Sample Out-of-Sample Error}

Let ${\bf Z}_{N}:=\left(Z_{i}\right)_{i=1}^{N}$ be a random sample of observations in a given data set, and let $Z_{N+1}\sim P$ denote a random variable with the same distribution $P$ that is independent of ${\bf Z}_{N}$. For a given data set ${\bf Z}_{N}$ and a given model $\tilde{\cal F}$, we define the conditional out-of-sample error (given data set ${\bf Z}_{N}$) as 
\[
e_{\tilde{\cal F}}\left({\bf Z}_{N}\right):=\E\left[\rest{l\left(\hat{f}_{{\bf Z}_{N}}, Z_{N+1}\right)}{\bf Z}_{N}\right],
\]
where $\hat{f}_{{\bf Z}_{N}}\in\tilde{\cal F}$ is an estimator, or an algorithm, that selects a mapping $\hat{f}_{{\bf Z}_{N}}$ within the model $\tilde{\cal F}$ based on data ${\bf Z}_{N}$. We also define the out-of-sample error, with expectation taken over different possible data sets ${\bf Z}_{N}$, as 
$
e_{\tilde{\cal F},N}:=\E\left[e_{\tilde{\cal F}}\left({\bf Z}_{N}\right)\right].
$

From the definition of the K-fold cross-validation estimator, it can be shown that $\E\left[\hat{e}_{CV}\left(\tilde{\cal F}\right)\right]=e_{{\cal F},\frac{K-1}{K}N}$. The asymptotic distribution of $\hat{e}_{CV}\left(\tilde{\cal F}\right)-e_{\tilde{\cal F},\frac{K-1}{K}N}$
has been studied in the statistics and machine learning literature. Our analysis below will be based on the results in \citet{austern2020asymptotics} on the asymptotic distribution of $\hat{e}_{CV}\left(\tilde{\cal F}\right)-e_{\tilde{\cal F},\frac{K-1}{K}N}$.

\subsubsection{Joint Parametrization of $\cF_\T$ and $\ol{\cF}$}

Recall that the model $\cF_\T$ is parametrized by $\theta \in \Theta$, and $f_\theta$ denotes a generic function in $\cF_\T$. Since $\cX$ is finite,  $\ol{\cF}$ can be parameterized by a finite-dimensional parameter $\b\in{\cal B}\subseteq\R^{d_{\ol{\cF}}}$ and use the notation $f_{\left[\b\right]}\in\ol{\cF}$ to denote a generic function in $\ol{\cF}$. Since by assumption $f^* \in \ol{\cF}$, we can define a parameter $\b^{*}$ to represent it, i.e. $f_{\left[\b^{*}\right]}=f^{*}$.

For arbitrary $\theta$ and $\beta$, write 
$
l_{\T}\left(\t, Z_{i}\right):= l \left( f_\t, Z_i\right)$ and $l_{\cal B} \left(\b, Z_{i}\right):= l\left(f_{[\b]}, Z_i\right).
$
We define the estimation mappings by
$\hat{\theta}\left({\bf Z}_{N}\right) :=\arg\min_{\theta\in\Theta}\frac{1}{N}\sum l_{\T}\left(\t, Z_i\right)$ and $
\hat{\b}\left({\bf Z}_{N}\right) :=\arg\min_{\b\in {\cal B}_\cM}\frac{1}{N}\sum l_{{\cal B}}\left(\b, Z_i\right).$ Let $\a:=\left(\t^{'},\b^{'}\right)^{'}$ denote the concatenation of the parameters $\t \in \T$ and $\b \in \cal{B}$,
$\a^{*}:=\left(\t^{*'},\b^{*'}\right)^{'}$ to be the parameters associated with the best mappings in $\cF_\T$ and $\ol{\cF}$, and also define
\[\hat{\a}\left({\bf Z}_{N}\right) :=\left(\hat{\t}^{'}\left({\bf Z}_{N}\right),\hat{\b}^{'}\left({\bf Z}_{N}\right)\right)^{'} =\arg\min_{\theta\in\Theta,\b\in{\cal B}}\frac{1}{N}\sum_{i=1}^{N}\left[l_{\T}\left(\t, Z_i\right)+l_{{\cal {\cal B}}}\left(\b, Z_i\right)\right]\]
to be an estimator for $\a^*$. Finally, define
\begin{align*}
\D l\left(\t,\b; Z_i\right) & :=l\left(f_{\t}, Z_i\right)-l\left(f_{[\b]}, Z_i\right)=l_{\T}\left(\t, Z_i\right)-l_{{\cal B}}\left(\b, Z_i\right).
 \end{align*}

\subsection{Construction of Variance Estimator}\label{app:VarEst}

To obtain the standard error of the estimator, we use a variance estimator adapted
from Proposition 1 in \citet{austern2020asymptotics}. Specifically,
for the $k$-th test set, let $f_{\hat{\t}^{-k}}$ and $\hat{f}^{-k}$
be the estimated mappings from models $\cF_\T$ and $\ol{\cF}$,
respectively. The difference in their test errors on observation $Z_{i}$
is $\Delta_{\t,k}\left(Z_{i}\right):=l\left(f_{\hat{\t}^{-k}},Z_{i}\right)-l\left(\hat{f}^{-k},Z_{i}\right),$
and the average difference across all observations in test fold $k$
is 
$
\overline{\Delta}_{\t,k}:=\frac{1}{J_{N}}\sum_{k(i)=k}\Delta_k\left(Z_{i}\right).
$
The sample variance of the difference in test errors for the $k$-th fold is 
\[
\hat{\s}_{\Delta_{\t},k}^{2}:=\frac{1}{J_{N}-1}\sum_{k(i)=k}\left(\Delta_{\t,k}\left(Z_{i}\right)-\overline{\Delta}_{\t,k}\right)^{2}
\]
which we  average over the $K$ folds and obtain
$
\hat{\s}_{\Delta_{\t}}^{2}:=\frac{1}{K}\sum_{k=1}^{K}\hat{\s}_{\D_{\t},k}^{2}.$

Similarly we define $\Delta_{f_{\text{base}},k}\left(Z_{i}\right):=l\left(f_{\text{base}},Z_{i}\right)-l\left(\hat{f}^{-k},Z_{i}\right)$, and correspondingly $\overline{\Delta}_{f_{\text{base}},k}$, $\hat{\s}_{\Delta_{f_{\text{base}}},k}^{2}$ and 
$\hat{\s}_{\Delta_{f_{\text{base}}}}^{2}$. Lastly, define
the covariance estimator by 
\[
\hat{\s}_{\Delta_{\text{\ensuremath{\t}}}\D_{f_{\text{base}}}}:=\frac{1}{K}\sum_{k=1}^{K}\frac{1}{J_{N}-1}\sum_{k(i)=k}\left(\Delta_{\t,k}\left(Z_{i}\right)-\overline{\Delta}_{\t,k}\right)\left(\Delta_{f_{\text{base}},k}\left(Z_{i}\right)-\ol{\Delta}_{f_{\text{base}},k}\left(Z_{i}\right)\right).
\]
Based on $\hat{\s}_{\Delta_{\t}}^{2},\hat{\s}_{\Delta_{f_{\text{base}}}}^{2}$
and $\hat{\s}_{\Delta_{\text{\ensuremath{\t}}}\D_{f_{\text{base}}}}$,
we define the following variance estimator for $\hat{\kappa}$:
\begin{equation}
\hat{\s}_{\hat{\kappa}}^{2}:=\frac{\hat{\s}_{\Delta_{\t}}^{2}-2\hat{\kappa}\hat{\s}_{\Delta_{\text{\ensuremath{\t}}}\D_{f_{\text{base}}}}+\hat{\kappa}^{2}\hat{\s}_{\Delta_{f_{\text{base}}}}^{2}}{\left[\hat{e}_{CV}\left(f_{\text{base}}\right)-\hat{e}_{CV}\left({\ol{\cF}}\right)\right]^{2}}.\label{eq:sekappa}
\end{equation}

\subsection{Material Based on \citet{austern2020asymptotics}}
\begin{assumption}[Conditions for Asymptotics of CV Estimator]
 \label{assu:CVasymp} \phantom{1}

\begin{enumerate}
\item $l_{\T}\left(\t, z\right)$ and $l_{{\cal B}}\left(\b, z\right)$ are
twice differentiable and strictly convex in $\t$ and $\b$.
\item $\E\left[\sup_{\t\in\T}l_{\T}^{4}\left(\t, Z_i\right)\right]<\infty$
and $\E\left[\sup_{\b\in{\cal B}}l_{{\cal B}}^{4}\left(\b, Z_i\right)\right]<\infty$. 
\item There exist open neighborhoods ${\cal O}_{\t^{*}}$ and ${\cal O}_{\b^{*}}$
of $\t^{*}$and $\b^{*}$ in $\T$ and ${\cal B}$ such that 

\begin{enumerate}
\item $\E\left[\sup_{\t\in{\cal O}_{\t^{*}}}\norm{\Dif_{\t}l_{\T}\left(\t,Z_i\right)}^{16}\right]<\infty$, $\E\left[\sup_{\b\in{\cal O}_{\b^{*}}}\norm{\Dif_{\b}l_{{\cal B}}\left(\b,Z_i\right)}^{16}\right]<\infty.$ 
\item $\E\left[\sup_{\t\in{\cal O}_{\t^{*}}}\norm{\Dif_{\t}^{2}l_{\T}\left(\t,Z_i\right)}^{16}\right]<\infty$,
$\E\left[\sup_{\b\in{\cal O}_{\b^{*}}}\norm{\Dif_{\b}l_{{\cal B}}\left(\b,Z_i\right)}^{16}\right]<\infty.$ 
\item there exists $c>0$ such that $\l_{min}\left(\Dif_{\t}^{2}l_{\T}\left(\t,Z_i\right)\right)\geq c$, $\l_{min}\left(\Dif_{\b}^{2}l_{{\cal B}}\left(\b,Z_i\right)\right)\geq c$ a.s.
uniformly on ${\cal O}_{\t^{*}}$ and ${\cal O}_{\b^{*}}$.
\end{enumerate}
\end{enumerate}
\end{assumption}

\begin{lemma}
\label{lem:asym_norm_CV}Under Assumption \ref{assu:CVasymp}: 
\[
\sqrt{N}\left[\hat{e}_{CV}\left(\cF_\T\right)-\hat{e}_{CV}\left(\ol{\cF}\right)-\left(e_{\cF_\T,\frac{K-1}{K}N}-e_{\ol{\cF},\frac{K-1}{K}N}\right)\right]\dto\cN\left(0,\textup{Var}\left(\D l\left(f_{\t^{*}},f^{*};Z_i\right)\right)\right).\]
\end{lemma}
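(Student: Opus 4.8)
The plan is to recognize the centered difference of cross-validation estimators as the $K$-fold CV statistic of the \emph{difference loss} $\D l(\t,\b;Z):=l_\T(\t,Z)-l_\cB(\b,Z)$, and to apply the asymptotic-linearity result of \citet{austern2020asymptotics} to the jointly-estimated parameter $\hat\a=(\hat\t',\hat\b')'$. First I would write
\[
CV(\cF_\T)-CV(\cF_\cM)=\frac{1}{K}\sum_{k=1}^{K}\frac{1}{J_N}\sum_{k(i)=k}\D l\bigl(\hat\t^{-k},\hat\b^{-k};Z_i\bigr),
\]
where $\hat\t^{-k},\hat\b^{-k}$ are trained on the folds other than $k$ and the test points $Z_i$ in fold $k$ are independent of them. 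Because the $K$ folds partition $\mathbf Z_N$, the leading term obtained by replacing the estimates with the population optima collapses to the full-sample empirical average $\frac1N\sum_{i=1}^N\D l(\t^*,\b^*;Z_i)$. The centering $e_{\cF_\T,\frac{K-1}{K}N}-e_{\cF_\cM,\frac{K-1}{K}N}$ equals $\E[CV(\cF_\T)-CV(\cF_\cM)]$ by the identity $\E[CV(\cF)]=e_{\cF,\frac{K-1}{K}N}$ noted in the excerpt.

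Second, I would verify that Assumption \ref{assu:CVasymp} delivers the hypotheses of Proposition 5 of \citet{austern2020asymptotics} for the stacked estimation problem. Twice differentiability and strict convexity of $l_\T$ and $l_\cB$ give unique, well-separated minimizers $\t^*,\b^*$ and their estimators; the uniform lower bound $c$ on the minimal eigenvalues of the Hessians, together with the $16$th-moment bounds on $\Dif l$ and $\Dif^2 l$ near $\t^*,\b^*$, supply the Bahadur-type expansions $\hat\t^{-k}-\t^*=O_p(N^{-1/2})$, $\hat\b^{-k}-\b^*=O_p(N^{-1/2})$ uniformly across folds and the control of the expansion remainders; and the $4$th-moment bound on the losses gives the integrability needed for the central limit theorem and for finiteness of the limiting variance.

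Third, the core of the argument is the asymptotic linearization. Taylor-expanding each summand around $(\t^*,\b^*)$ produces $\D l(\t^*,\b^*;Z_i)$ plus cross terms such as $\bigl[\frac1{J_N}\sum_{k(i)=k}\Dif_\t l_\T(\t^*;Z_i)\bigr]'(\hat\t^{-k}-\t^*)$. Since the test fold is independent of $\hat\t^{-k}$ and the first-order condition gives $\E[\Dif_\t l_\T(\t^*;Z_i)]=0$ (and likewise for $\b$), the bracketed test-set gradient average is $O_p(N^{-1/2})$ and, multiplied by the $O_p(N^{-1/2})$ estimation error, is $O_p(N^{-1})$; such terms contribute only to the centering, which is subtracted off, and are $o_p(N^{-1/2})$ in the scaled statistic. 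Hence $\sqrt N$ times the centered difference equals $\sqrt N\bigl(\frac1N\sum_{i=1}^N\D l(\t^*,\b^*;Z_i)-\E[\D l(\t^*,\b^*;Z_i)]\bigr)+o_p(1)$, and the ordinary CLT yields convergence to $\cN\bigl(0,\textup{Var}(\D l(\t^*,\b^*;Z_i))\bigr)$. Finally, since $f_{[\b^*]}=f^*$, we have $\D l(\t^*,\b^*;Z_i)=l(f_{\t^*},Z_i)-l(f^*,Z_i)=\D l(f_{\t^*},f^*;Z_i)$, matching the stated variance.

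I expect the main obstacle to be this third step: rigorously showing that the estimation error in $\hat\t^{-k}$ and $\hat\b^{-k}$ enters only the (subtracted) bias and not the asymptotic variance, uniformly over the $K$ folds. This is precisely the delicate content borrowed from \citet{austern2020asymptotics}; the remaining work is bookkeeping to translate their single-loss, single-estimator statement into the present difference-of-losses, jointly-estimated form. The cleanest route may be to apply their result to the bivariate vector $(CV(\cF_\T),CV(\cF_\cM))$ and read off the variance of the difference from the joint limiting covariance, whose two leading influence terms are $l(f_{\t^*},Z_i)$ and $l(f^*,Z_i)$.
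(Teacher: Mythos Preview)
Your proposal is essentially correct, with one clarification: Proposition 5 of \citet{austern2020asymptotics}, as the paper itself notes, is stated for the case where the evaluation loss coincides with the training loss, so it does not apply verbatim to the difference loss $\D l$ (which is not the training objective for $\hat\a$ and is concave in $\b$). However, your Taylor-expansion argument in step three stands on its own and supplies exactly the asymptotic linearization that Proposition 5 would otherwise package, so the overall argument goes through.

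The paper's proof takes a shorter and slightly different route: it applies Proposition 5 three times---to $\t$ with loss $l_\T$, to $\b$ with loss $l_{\cal B}$, and to the stacked parameter $\a=(\t,\b)$ with the \emph{summed} loss $l_\T+l_{\cal B}$---obtaining the asymptotic variances of $CV(\cF_\T)$, $CV(\cF_\cM)$, and $CV(\cF_\T)+CV(\cF_\cM)$, and then extracts the variance of the difference via the polarization identity $\textup{Var}(X-Y)=2\textup{Var}(X)+2\textup{Var}(Y)-\textup{Var}(X+Y)$. Your direct linearization has the advantage of making transparent why the estimation error in $(\hat\t^{-k},\hat\b^{-k})$ does not enter the limiting variance, and it sidesteps a subtlety in the paper's argument: the polarization identity on variances alone does not by itself imply convergence in distribution of the difference, so one must implicitly invoke the influence-function representation behind Proposition 5 to justify that step. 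The ``cleanest route'' you mention at the end---joint convergence of the pair $(CV(\cF_\T),CV(\cF_\cM))$ and reading off the difference---is in fact the paper's strategy; the three applications plus polarization are just its device for computing the off-diagonal of the limiting covariance without writing out the joint influence function.
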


\begin{proof}
Proposition 5 of \citet{austern2020asymptotics} establishes the asymptotic normality of cross-validation risk estimator and its asymptotic variance under parametric settings where the loss function used for training is the same as the loss function used for evaluation. Applying Proposition 5 of \citet{austern2020asymptotics} under Assumption
\ref{assu:CVasymp} to $\t,\b$ and $\a=\left(\t,\b\right)$, we obtain:
\begin{align*}
\sqrt{N}\left(\hat{e}_{CV}\left(\cF_\T\right)-e_{\cF_\T,\frac{K-1}{K}N}\right) & \dto\cN\left(0,\textup{Var}\left(l\left(f_{\t^{*}},Z_i\right)\right)\right),\\
\sqrt{N}\left(\hat{e}_{CV}\left(\ol{\cF}\right)-e_{\cF,\frac{K-1}{K}N}\right) & \dto\cN\left(0,\textup{Var}\left(l\left(f^{*},Z_i\right)\right)\right),\\
\sqrt{N}\left(\hat{e}_{CV}\left(\cF_\T\right)+\hat{e}_{CV}\left(\ol{\cF}\right)-e_{\cF_\T,\frac{K-1}{K}N}-e_{\ol{\cF},\frac{K-1}{K}N}\right) & \dto\cN\left(0,\textup{Var}\left(l\left(f_{\t^{*}},Z_i\right)+l\left(f^{*},Z_i\right)\right)\right).
\end{align*}
Using the equality $\textup{Var}\left(X+Y\right)+\textup{Var}\left(X-Y\right)=2\textup{Var}\left(X\right)+2\textup{Var}\left(Y\right)$,
we then deduce that
\[
\sqrt{N}\left[\hat{e}_{CV}\left(\cF_\T\right)-\hat{e}_{CV}\left(\ol{\cF}\right)-\left(e_{\cF_\T,\frac{K-1}{K}N}-e_{\ol{\cF},\frac{K-1}{K}N}\right)\right]\dto\cN\left(0,\textup{Var}\left(\D l\left(f_{\t^{*}},f^{*};Z_{i}\right)\right)\right).
\]
\end{proof}

\begin{lemma}[Application of Proposition 1 of \citealp{austern2020asymptotics}]
 \label{lem:asym_var_CV}Under Assumption \ref{assu:CVasymp}, 
$\hat{\s}_{\D}^{2}\pto \textup{Var}\left(\D l\left(f_{\t^{*}},f^{*};Z_{i}\right)\right).$
\end{lemma}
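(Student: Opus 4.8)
The plan is to prove this exactly as Lemma~\ref{lem:asym_norm_CV} was proved: invoke \cite{austern2020asymptotics} for each of the two models and for their concatenation, then recombine through a polarization identity---except that here the recombination is carried out at the level of the variance \emph{estimator} rather than at the level of its probability limit. Recall that $\hat{\s}_{\D}^{2}=\frac{1}{K}\sum_{k=1}^{K}\hat{\s}_{\D,k}^{2}$ is the across-fold average of the within-fold sample variances of $\Delta(Z_i)=l(f_{\hat{\t}^{-k}},Z_i)-l(f_{[\hat{\b}^{-k}]},Z_i)=\D l(\hat{\t}^{-k},\hat{\b}^{-k};Z_i)$, the gap between the out-of-sample losses of the fitted mappings from $\cF_{\T}$ and from $\cF_{\cM}$. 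Proposition~1 of \cite{austern2020asymptotics} shows that, for a parametric model trained and evaluated under one convex loss obeying the conditions of Assumption~\ref{assu:CVasymp}, the across-fold average of the within-fold sample variance of its out-of-sample losses converges in probability to the variance of that loss at the pseudo-true parameter---that is, to the very quantity that appears as the asymptotic variance in its central limit theorem.

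First I would apply Proposition~1 three times, in parallel with the three displays in the proof of Lemma~\ref{lem:asym_norm_CV}. Writing $\widehat{V}$ for the across-fold average of the within-fold sample-variance operator: applied to $l(f_{\hat{\t}^{-k}},Z_i)$ it yields $\widehat{V}\pto\textup{Var}(l(f_{\t^{*}},Z_i))$; applied to $l(f_{[\hat{\b}^{-k}]},Z_i)$ it yields $\widehat{V}\pto\textup{Var}(l(f^{*},Z_i))$; and applied to the concatenated model parametrized by $\a=(\t,\b)$ and trained under the separable loss $l_{\T}+l_{\cB}$, it yields, for the summed out-of-sample losses $l(f_{\hat{\t}^{-k}},Z_i)+l(f_{[\hat{\b}^{-k}]},Z_i)$, the limit $\textup{Var}(l(f_{\t^{*}},Z_i)+l(f^{*},Z_i))$. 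Here one checks that the concatenated problem still satisfies Assumption~\ref{assu:CVasymp}: because $l_{\T}+l_{\cB}$ is additively separable in $(\t,\b)$, its Hessian is block diagonal, so strict convexity and the uniform minimum-eigenvalue bound are inherited block by block, and the relevant moment bounds simply add.

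Then I would combine the three limits through the polarization identity. The sample-variance operator obeys the exact algebraic identity $\widehat{V}(A-B)=2\widehat{V}(A)+2\widehat{V}(B)-\widehat{V}(A+B)$ for any paired arrays $\{(A_i,B_i)\}$ (it is the parallelogram law applied to the centered vectors, valid fold by fold and hence after averaging). Taking $A_i=l(f_{\hat{\t}^{-k}},Z_i)$ and $B_i=l(f_{[\hat{\b}^{-k}]},Z_i)$ expresses $\hat{\s}_{\D}^{2}$ as this fixed linear combination of the three quantities above, so by the continuous mapping theorem its limit is $2\textup{Var}(l(f_{\t^{*}},Z_i))+2\textup{Var}(l(f^{*},Z_i))-\textup{Var}(l(f_{\t^{*}},Z_i)+l(f^{*},Z_i))$, which by the population identity $\textup{Var}(X-Y)=2\textup{Var}(X)+2\textup{Var}(Y)-\textup{Var}(X+Y)$ equals $\textup{Var}(l(f_{\t^{*}},Z_i)-l(f^{*},Z_i))=\textup{Var}(\D l(f_{\t^{*}},f^{*};Z_i))$, as claimed.

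The main obstacle is contained entirely in invoking Proposition~1 of \cite{austern2020asymptotics}: the difficulty it resolves is that $\hat{\t}^{-k}$ and $\hat{\b}^{-k}$ are trained on the data, so the $\Delta(Z_i)$ are not i.i.d.\ evaluations of a fixed function, and one must show that the estimation error does not contaminate the variance estimate asymptotically. Assumption~\ref{assu:CVasymp} supplies precisely the ingredients that argument needs---strict convexity and consistency of the $M$-estimators, together with the higher-moment bounds on the loss, its gradient, and its Hessian that control the linearization remainder---so beyond verifying these conditions (including their inheritance by the concatenated model) the remaining steps are the routine algebra of the polarization identity and an application of the continuous mapping theorem.
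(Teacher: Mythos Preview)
Your proposal is correct and follows essentially the same route as the paper: apply Proposition~1 of \cite{austern2020asymptotics} to $\t$, to $\b$, and to the concatenated parameter $\a=(\t,\b)$ trained under the separable loss $l_{\T}+l_{\cB}$, then combine the three limits via the polarization identity $\widehat{V}(A-B)=2\widehat{V}(A)+2\widehat{V}(B)-\widehat{V}(A+B)$ at the sample level. Your added remarks on why the concatenated model inherits Assumption~\ref{assu:CVasymp} (block-diagonal Hessian, additivity of moment bounds) are a welcome elaboration that the paper leaves implicit.
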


\begin{proof}
Applying Proposition 1 of \citet{austern2020asymptotics} under Assumption \ref{assu:CVasymp} to $\t,\b$ and $\a=\left(\t,\b\right)$:
\begin{align*}
\hat{\s}_{\cF_\T}^{2} & :=\frac{1}{K}\sum_{k=1}^{K}\frac{1}{J_{N}-1}\sum_{k\left(i\right)=k}\left(l\left(f_{\hat{\t}^{-k}},Z_{i}\right)-\frac{1}{J_{N}}\sum_{k\left(j\right)=k}l\left(f_{\hat{\t}^{-k}},Z_{j}\right)\right)^{2}
  \pto \textup{Var}\left(l\left(f_{\t^{*}},Z_i\right)\right),
\end{align*}
\begin{align*}
\hat{\s}_{\ol{\cF}}^{2} & :=\frac{1}{K}\sum_{k=1}^{K}\frac{1}{J_{N}-1}\sum_{k\left(i\right)=k}\left(l\left(f_{\left[\hat{\b}^{-k}\right]},Z_{i}\right)-\frac{1}{J_{N}}\sum_{k\left(j\right)=k}l\left(f_{\left[\hat{\b}^{-k}\right]},Z_{j}\right)\right)^{2}
  \pto \textup{Var}\left(l\left(f^{*},Z_i\right)\right),
\end{align*}
and 
\begin{align*}
 & \hat{\s}_{\cF_\T+\ol{\cF}}^{2}\\
:=\  & \frac{1}{K}\sum_{k=1}^{K}\frac{1}{J_{N}-1}  \sum_{k\left(i\right)=k}\left(l\left(f_{\hat{\t}^{-k}},Z_i\right)+l\left(f_{\left[\hat{\b}^{-k}\right]},Z_{i}\right)-\frac{1}{J_{N}}\sum_{k\left(j\right)=k}\left[l\left(f_{\left[\hat{\b}^{-k}\right]},Z_{j}\right)+l\left(f_{\hat{\t}^{-k}},Z_{i}\right)\right]\right)^{2}\\
\pto\  & \textup{Var}\left(l\left(f_{\t^{*}},Z_{i}\right)+l\left(f^{*},Z_{i}\right)\right).
\end{align*}

Hence,
$\hat{\s}_{\D_\t}^{2}  =2\hat{\s}_{\cF_\T}^{2}+2\hat{\s}_{\ol{\cF}}^{2}-\hat{\s}_{\cF_\T+\ol{\cF}}^{2}
 \pto 2\textup{Var}\left(l\left(f_{\t^{*}},Z_{i}\right)\right)+2\textup{Var}\left(l\left(f^{*},Z_{i}\right)\right)-2\textup{Var}\left(l\left(f_{\t^{*},Z_{i}}\right)+l\left(f^{*},Z_{i}\right)\right) =\textup{Var}\left(\D l\left(f_{\t^{*}},f^{*};Z_{i}\right)\right).$
\end{proof}

\subsection{Finishing the Proof}

Lemma \ref{lem:asym_norm_CV} characterizes the limit distribution of 
\[
\sqrt{N}\left[\hat{e}_{CV}\left(\cF_\T\right)-\hat{e}_{CV}\left(\ol{\cF}\right)-\left(e_{\cF_\T,\frac{K-1}{K}N}-e_{\ol{\cF},\frac{K-1}{K}N}\right)\right]
\]
which we show is also the limit distribution of $\sqrt{N}\left[\hat{e}_{CV}\left(\cF_\T\right)-\hat{e}_{CV}\left(\ol{\cF}\right)-\left(e_{\cF_\T}-e_{\ol{\cF}}\right)\right].$

To see this, notice that 
\begin{align*}
e_{\cF_\T,\frac{K-1}{K}N}-e_{\cF_\T} &= \E\left[l_{\T}\left(\hat{\t}^{-k\left(i\right)},Z_{i}\right)-l_{\T}\left(\t^{*},Z_{i}\right)\right]\\
=\  & \E\left[\Dif l_{\T}\left(\t^{*},Z_{i}\right)\cd\left(\hat{\t}^{-k\left(i\right)}-\t^{*}\right)+\left(\hat{\t}^{-k\left(i\right)}-\t^{*}\right)^{'}\Dif^{2}l_{\T}\left(\tilde{\t},Z_{i}\right)\cd\left(\hat{\t}^{-k\left(i\right)}-\t^{*}\right)\right]\\
=\  & 0+\E\left[\left(\hat{\t}^{-k\left(i\right)}-\t^{*}\right)^{'}\Dif^{2}l_{\T}\left(\tilde{\t},Z_{i}\right)\cd\left(\hat{\t}^{-k\left(i\right)}-\t^{*}\right)\right]\\
=\  & \frac{1}{N-J_{N}}\E\left[\sqrt{N-J_{N}}\left(\hat{\t}^{-k\left(i\right)}-\t^{*}\right)^{'}\Dif^{2}l_{\T}\left(\tilde{\t},Z_{i}\right)\cd\sqrt{N-J_{N}}\left(\hat{\t}^{-k\left(i\right)}-\t^{*}\right)\right]\\
=\  & c\frac{1}{N-J_{N}}+o\left(\frac{1}{N-J_{N}}\right)=c\frac{K}{K-1}\cd\frac{1}{N}+o\left(\frac{1}{N}\right)
\end{align*}
since $J_{N}=N/K$. Therefore
$\sqrt{N}\left(e_{\T,\frac{K-1}{K}N}-e_{\T}\right)=o_{p}\left(1\right),$
and $\sqrt{N}\left(e_{\ol{\cF},\frac{K-1}{K}N}-e_{\ol{\cF}}\right)=o_{p}\left(1\right)$. Hence: 
$
\sqrt{N}\left[\hat{e}_{CV}\left(\cF_\T\right)-\hat{e}_{CV}\left(\ol{\cF}\right)-\left(e_{\cF_\T}-e_{\ol{\cF}}\right)\right]\dto\cN\left(0,\textup{Var}\left(\D l\left(f_{\t^{*}},f^{*};Z_{i}\right)\right)\right).
$

Now, we replicate the previous result with $f_{\text{base}}$ in place of $\cF_\T$ and obtain
\begin{align*}
\sqrt{N}\left[\hat{e}_{CV}\left(f_{\text{base}}\right)-\hat{e}_{CV}\left(\ol{\cF}\right)-\left(e_{f_{\text{base}}}-e_{\ol{\cF}}\right)\right] & \dto\cN\left(0,\textup{Var}\left(\D l\left(f_{\text{base}},f^{*};Z_{i}\right)\right)\right).
\end{align*}
and jointly
\[
\sqrt{N}\left(\begin{array}{c}
\hat{e}_{CV}\left(\cF_\T\right)-\hat{e}_{CV}\left(\ol{\cF}\right)-\left(e_{\cF_\T}-e_{\ol{\cF}}\right)\\
\hat{e}_{CV}\left(f_{\text{base}}\right)-\hat{e}_{CV}\left(\ol{\cF}\right)-\left(e_{f_{\text{base}}}-e_{\ol{\cF}}\right)
\end{array}\right)\dto\cN\left({\bf 0},\left(\begin{array}{cc}
\s_{\D_{\t}}^{2} & \s_{\D_{\t}\D_{f_{\text{base}}}}\\
\s_{\D_{\t}\D_{f_{\text{base}}}} & \s_{\D_{f_{\text{base}}}}^{2}
\end{array}\right)\right)
\]
with
$\s_{\D_{\t}}^{2} :=\textup{Var}\left(\D l\left(f_{\t^{*}},f^{*};Z_{i}\right)\right)$, $\s_{\D_{f_{\text{base}}}}^{2}  :=\textup{Var}\left(\D l\left(f_{\text{base}},f^{*};Z_{i}\right)\right)$, and 
$\s_{\D_{\t}\D_{f_{\text{base}}}}  :=\textup{Cov}\left(\D l\left(f_{\t^{*}},f^{*};Z_{i}\right),\D l\left(f_{\text{base}},f^{*};Z_{i}\right)\right).$

By Lemma \ref{lem:asym_var_CV}, Assumption \ref{assu:naive} and
the Delta Method, we have 
\[
\sqrt{N}\left(\hat{\kappa}-\kappa\right)\dto \cN\left(0,\ \frac{\s_{\D_{\t}}^{2}-2\kappa\s_{\D_{\text{\ensuremath{\t}}}\D_{f_{\text{base}}}}+\kappa^{*2}\s_{\D_{f_{\text{base}}}}^{2}}{d^{2}\left(f_{\text{base}},f^{*}\right)}\right).
\]
Since $\hat{\s}_{\hat{\kappa}} \pto \left(\s_{\D_{\t}}^{2}-2\kappa\s_{\D_{\text{\ensuremath{\t}}}\D_{f_{\text{base}}}}+\kappa^{*2}\s_{\D_{f_{\text{base}}}}^{2}\right)/d^{2}\left(f_{\text{base}},f^{*}\right)$,
we have $\sqrt{N}\left(\hat{\kappa}-\kappa\right)/\hat{\s}_{\hat{\kappa}}\dto\cN\left(0,1\right)$.

\section{Supplementary Material to Application 1}
\subsection{Estimates for Application 1} \label{app:table}

\begin{table}[H]
\caption{Restrictiveness and Completeness for Certainty Equivalents}

\centering{}%
\begin{tabular}{@{}c@{}lcc@{}}
\hline 
 & \# Param & \multicolumn{1}{c}{Restrictiveness} & \multicolumn{1}{c}{Completeness}\tabularnewline
\hline 
\hline 
CPT Specifications &  &  & \tabularnewline
\hline 
$\alpha,\delta,\gamma$ & $\quad$$\quad$ 3 &  0.28   & 0.95 
 \tabularnewline
 & &  (0.003) & (0.02)     \tabularnewline
 
$\delta,\gamma$ & $\quad$$\quad$ 2 &  0.37   & 0.95
 \tabularnewline
 & & (0.004)  & (0.02) 
 \tabularnewline
 
$\alpha,\gamma$ & $\quad$$\quad$ 2 &  0.51   & 0.95
 \tabularnewline
 & & (0.006)  & (0.02) 
 \tabularnewline
 
 $\alpha,\delta$ & $\quad$$\quad$ 2 &   0.49  & 0.27
 \tabularnewline
 & &  (0.005)  & (0.05)
 \tabularnewline
 
 $\alpha$ & $\quad$$\quad$ 1 &  0.91   & 0.25
 \tabularnewline
 & & (0.005) & (0.05) 
 \tabularnewline

 $\delta$ & $\quad$$\quad$ 1 &   0.68  & 0.26
 \tabularnewline
 & & (0.009) & (0.06)
 \tabularnewline

 $\gamma$ & $\quad$$\quad$ 1 &   0.59  & 0.71
 \tabularnewline
 & & (0.006) & (0.06)
 \tabularnewline
 
 \hline
 DA Specifications &  &  & \tabularnewline
\hline 
$\alpha,\eta$ & $\quad$$\quad$ 2 &  0.47  & 0.27  
 \tabularnewline
 & & (0.006) & (0.06)
 \tabularnewline
 
 $\eta$ & $\quad$$\quad$ 1 &  0.69    & 0.27
 \tabularnewline
 & & (0.009) & (0.05)
 \tabularnewline
\hline
\end{tabular}
      \label{tab:gains}
\end{table}

    Restrictiveness is estimated from 1000 simulations and we report the analytic standard errors. Because of potential dependence among the reported certainty equivalents of subjects, we compute the standard errors for completeness using a block bootstrapping procedure that clusters together all observations from the same subject.\footnote{When generating a bootstrap sample, we randomly sample the 179 subjects with replacement, and include all the reported certainty equivalents of the drawn subjects with replacement.} We then carry out our (cross-validated) estimation of completeness on each bootstrap sample, and compute the standard errors based on 1000 bootstrap samples. These bootstrapped standard errors are similar to the analytic standard errors we get under a revision of the formulas in Section \ref{Est} to accommodate clustering on subjects (see the following section).
    
\subsection{Analytical SE with Clustering} \label{subsec:analyticalcluster}

We discuss here an alternative method for calculating clustered standard errors for completeness.

We consider each subject's reported certainty equivalents for the 25 lotteries as a 25-dimensional vector. We assume that this 25-dimensional vector is i.i.d. across subjects, but leave the dependence within this subject-specific vector unrestricted.  Specifically, define the feature space $\mathcal{X}$ to be a singleton consisting of the $25 \times 3$ matrix whose rows are the different lottery tuples $(\overline{z},\underline{z},p)$ in the \citet{Bruhin} data. The outcome space is $\mathcal{Y}=\mathbb{R}^{25}$, where a typical element is a vector of 25 certainty equivalents for the 25 lotteries. The expected certainty equivalent vector over subjects is represented by a mapping $f: \mathcal{X} \rightarrow \mathbb{R}^{25}$, which is simply a vector in $\mathbb{R}^{25}$.

Finally, let the loss function $l$ be  
$$l(f,Y_i,X) := \frac{1}{25}\norm{Y_i-f_\t(X)}^2 = \frac{1}{25}\sum_{h=1}^{25}(Y_{i,h}-f_{h})^2.$$
This loss function groups together the squared losses of each individual subject across the 25 lotteries. Under this setup,  the analytical formula for standard errors provided in Section $\ref{sec:estimateComplete}$ and Appendix $\ref{app:VarEst}$ can be directly applied, with sample size $N = 179$. Table \ref{tab:analytic} reports the standard errors for completeness computed in this way.

\begin{table}[H]
\centering{}%
\begin{tabular}{@{}c@{}lcc@{}}
\hline 
 & \# Param &  \multicolumn{1}{c}{Completeness}\tabularnewline
\hline 
\hline 
CPT Specifications &    & \tabularnewline
\hline 
$\alpha,\delta,\gamma$ & $\quad$$\quad$ 3  & 0.95 
 \tabularnewline
 &  & (0.09)     \tabularnewline
 
$\delta,\gamma$ & $\quad$$\quad$ 2    & 0.95
 \tabularnewline
 &   & (0.08) 
 \tabularnewline
 
$\alpha,\gamma$ & $\quad$$\quad$ 2 &    0.95
 \tabularnewline
 &  & (0.09) 
 \tabularnewline
 
 $\alpha,\delta$ & $\quad$$\quad$ 2  & 0.27
 \tabularnewline
 &   & (0.09)
 \tabularnewline
 
 $\alpha$ & $\quad$$\quad$ 1   & 0.25
 \tabularnewline
 &  & (0.05) 
 \tabularnewline

 $\delta$ & $\quad$$\quad$ 1   & 0.26
 \tabularnewline
 & & (0.06)
 \tabularnewline

 $\gamma$ & $\quad$$\quad$ 1   & 0.71
 \tabularnewline
 & & (0.06)
 \tabularnewline
 
 \hline
 DA Specifications &    & \tabularnewline
\hline 
$\alpha,\eta$ & $\quad$$\quad$ 2   & 0.27  
 \tabularnewline
 &  & (0.06)
 \tabularnewline
 
 $\eta$ & $\quad$$\quad$ 1   & 0.27
 \tabularnewline
 &  & (0.05)
 \tabularnewline
\hline

\end{tabular}
      \label{tab:analytic}

\end{table}
    \subsection{Restrictiveness on Alternative Sets of Lotteries} \label{sec:TableROther}
    
We report here the restrictiveness values used to construct the CDFs in Figure \ref{fig:CDF} as well as the papers the corresponding sets of lotteries were derived from, and the number of lotteries  from each paper. 

\begin{table}[H]
\caption{Restrictiveness}

\centering{}%
\begin{tabular}{cccc}

\hline 

            Source Paper & \# Lotteries & CPT$(\alpha, \delta, \gamma)$ & DA$(\alpha, \eta)$  \\\hline \hline
             \citet{abdellaoui2015experiments} & 3 &0.04 & 0.31\\
            & &(0.00)& (0.01) \\
            \citet{murad2016risk} & 25 &0.25 & 0.38 \\
            &&(0.00) & (0.00)\\
              \citet{sutter2013impatience} & 4 &0.46& 0.46 \\
            &&(0.01)& (0.01)\\
             \citet{fan2019decisions} & 19 &0.23& 0.25 \\
            &&(0.00)& (0.00)\\
             \citet{bernheim2020empirical} & 7 &0.13& 0.45\\
            &&(0.00)& (0.01)\\
            \hline
        \end{tabular}
\end{table}

\section{``Pairing" Completeness and Restrictiveness}  \label{sec:Extend}

In this section, we show that completeness and restrictiveness are related via the equation
\begin{equation} \label{eq:relationship}
\kappa(\cF_\T) = 1 - r(\cF_\T,\ol{\cF}),
\end{equation}
when the loss function $l$ used to define $e_{P}$, and the discrepancy function $d$ used to define $r$, are ``paired" in a coherent way, which we now explain.

We first provide more details about the formulation of completeness. Suppose that besides $X$, there is a random outcome $Z$. We will consider hypothetical joint distributions $\widetilde{P}$ with different conditional distribution $\widetilde{P}_{Z|X}$, where the marginal distribution $\widetilde{P}_X$ is held fixed. The analyst wants to learn a statistic of the conditional distribution of $Z$ given $X$, which we denote by 
$Y\in {\cal {Y}}$. Two leading cases of this problem are: (a) prediction of the conditional expectation $\mathbb{E}_{\widetilde{P}}[\rest Z X]$, and (b) prediction of the conditional distribution $\widetilde{P}_{Z\mid X}$ itself. 
 As in the main text, a prediction is any function $f: \mathcal{X} \rightarrow \mathcal{Y}$, and we define $\ol{\cF}$ to be the set of all such mappings.
 
Let $l: \ol{\cF} \times \mathcal{X} \times \mathcal{Z} \rightarrow \mathbb{R}$ be a loss function, where  $l(f,(x,z))$ is the loss assigned to predicting $f(x)$ when the realized outcome is $z$. We define the expected error of a prediction rule $f$ with respect to the distribution $\widetilde{P}$ by
\begin{equation} \label{eq:error}
e_{\widetilde{P}}\left(f\right)  :=\E_{\widetilde{P}}\left[l(f,(X,Z))\right],
\end{equation}
and let $f_{\widetilde{P}}^*$ denote the prediction rule that minimizes the expected error under $\widetilde{P}$:
\[f_{\widetilde{P}}^* := \min_{f \in \mathcal{F}} e_{\widetilde{P}}(f).\]
As in the main text, $P$ denotes the distribution from which real data is generated. Then the completeness of a model $\cF_\T$ as defined in \citet{FKLM} can be written as
\[\kappa(\cF_\T) = \frac{e_{P}(f_{\text{base}}) - e_{P}(\cF_\T)}{e_{P}(f_{\text{base}}) - e_{P}(\ol{\cF})} \equiv 1 - \frac{e_{P}(\cF_\T) - e_{P}(\ol{\cF})}{e_{P}(f_{\text{base}}) - e_{P}(\ol{\cF})}.\]

We now formally define the meaning of ``pairing" between the discrepancy function $d$ and the loss function $l$.

\begin{definition}
The loss function $l$ and discrepancy $d: \ol{\cF} \times \ol{\cF} \rightarrow \mathbb{R}$ are \emph{paired} if 
\begin{equation} \label{eq:decompose}
d(f,f_{\widetilde{P}}^*) = e_{\widetilde{P}}(f) - e_{\widetilde{P}}(f_{\widetilde{P}}^*)
\end{equation}
for every distribution $\widetilde{P} \in \Delta(\mathcal{X} \times \mathcal{Z})$ whose marginal distribution on $\mathcal{X}$ is $P_X$. That is, $d(f,f_{\widetilde{P}}^*)$ is the difference between the error of prediction rule $f$ and the error of the best prediction rule $f_{\widetilde{P}}^*$.\footnote{This relation resembles but differs from  the  coupling of the ``cost of uncertainty'' and the ``value of information'' in  \cite{frankel2019quantifying}, which concerns comparisons of different signal structures, as opposed to comparing model classes.}  
\end{definition}

As noted in the main text, if $l$ and $d$ are paired, then (\ref{eq:relationship}) holds, where $f^* = f_{P}^*$. Moreover, as also noted in the main text, the following functions are  paired:
\begin{itemize}
    \item Let $\mathcal{Y} = \mathbb{R}$. Then squared loss $
l\left(f,(x,z)\right):=\left(z-f(x)\right)^{2}$ and the squared distance discrepancy
$
d_{MSE}(f,g):=\E_{P_{X}}\left[\left(f\left(X\right)-g\left(X\right)\right)^{2}\right]$ 
are paired.
\item  Let $\mathcal{Y}$ be the set of distributions over a finite set $\mathcal{Z}$.
Then  negative (conditional) log-likelihood
$
l\left(f,(x,z)\right):=-\log f\left(\rest zx\right)
$  and the KL-divergence discrepancy
\[
d_{KL}(f,g):=\E_{P_{X}}\left[\sum_{z \in \mathcal{Z}}g\left(\rest zx\right)\left[\log g\left(\rest zx\right)-\log f\left(\rest zx\right)\right]\right]\]
are paired.
\end{itemize}

\smallskip

\subsection{A Loss Function That Cannot be Paired with any Discrepancy} \label{app:LAD}

\noindent When $\mathcal{Y}$ is the set of distributions on $\mathcal{Z}$, then every loss function $l$ has a paired discrepancy function, since we can define $d(f,f_{\widetilde{P}}) := e_{f_{\widetilde{P}}}(f) - e_{f_{\widetilde{P}}}(f_{\widetilde{P}})$.\footnote{This is because  $\widetilde{P}$ is completely pinned down by $f_{\widetilde{P}}$ given $P_X$, so $e_{\widetilde{P}} = e_{f_{\widetilde{P}}}$.} But in general, for some prediction problems and loss functions $l$, there may not exist a discrepancy $d$ such that $l$ and $d$ are paired, as the next example shows. In these cases, we can still evaluate restrictiveness and completeness, but they will not have an evident relationship.

Consider a setting where $X$ is degenerate, i.e., ${\cal X}$ is a singleton, so that the joint distribution $\widetilde{P}$ is completely characterized by the distribution of $Y$. Furthermore, let ${\cal Y}:=\left[0,1\right]$. If 
$
f^{*}:=\text{med}\left(Y\right)\in\mathcal{Y}=\left[0,1\right]$, then a mapping $f:{\cal X}\to{\cal S}$ is just a number in $\left[0,1\right]$. When the loss function is the absolute deviation
$
l\left(f,y\right):=\left|y-f\right|,
$
and the error function is mean absolute deviation
$
e_{\widetilde{P}}\left(f\right):=\E_{\widetilde{P}}\left[\left|Y-f\right|\right],
$
the true median $f^{*}$ minimizes the error, i.e. $
f^{*}\in\arg\min_{f\in\left[0,1\right]}e_{\widetilde{P}}\left(f\right).
$ However, it is not true that $
\left|f-f^{*}\right|=e_{\widetilde{P}}\left(f\right)-e_{\widetilde{P}}\left(f^{*}\right)
$ for any $f\in\left[0,1\right]$.
To see this, suppose that $Y\sim U\left[0,1\right]$ under $\widetilde{P}$. Then $f^{*}=0.5$
and 
$e_{\widetilde{P}}\left(f^{*}\right)=0.25$. However, for $f=0.4$, we have
$e_{\widetilde{P}}\left(f\right)=0.26.$
but
$
\left|f-f^{*}\right|=0.1\neq0.01=e_{\widetilde{P}}\left(f\right)-e_{\widetilde{P}}\left(f^{*}\right).
$

Moreover, there is no  function $d:\left[0,1\right]^{2}\to\left[0,1\right]$
such that decomposability \eqref{eq:decompose} holds, which would require that 
$
d\left(f,f_{\widetilde{P}}\right)=e_{\widetilde{P}}\left(f\right)-e_{\widetilde{P}}\left(f_{\widetilde{P}}\right)$ for any distribution $P$ of
$Y$ supported on $\left[0,1\right]$.
To see this, suppose that $Y\sim U\left[0,1\right]$ under $\widetilde{P}_1$,
we have
\[
e_{\widetilde{P}_{1}}\left(f\right)-e_{\widetilde{P}_{1}}\left(f_{\widetilde{P}_{1}}\right)=\left(f-0.5\right)^{2}=\left(f-f_{\widetilde{P}_{1}}\right)^{2},\quad \forall f \in [0,1].
\]
However, supposing that, under $\widetilde{P}_2$, the probability density function of $Y$ is given by $2y$ for $y\in\left[0,1\right]$,
we have
$
f_{\widetilde{P}_{2}}=\sqrt{2}/2$ and $e_{\widetilde{P}_{2}}\left(f_{\widetilde{P}_{2}}\right)=(2-\sqrt{2})/3$
but 
\[
e_{\widetilde{P}_{2}}\left(f\right)-e_{\widetilde{P}_{2}}\left(f_{\widetilde{P}_{2}}\right)=\frac{1}{3}\left(2f^{3}-3f^{2}+\sqrt{2}\right)\neq\left(f-f_{\widetilde{P}_{2}}\right)^{2}.
\]

\end{document}